\newif\ifpdf
\newtheorem{thm}{Theorem}[section]
\newtheorem{cor}[thm]{Corollary}
\newtheorem{lem}[thm]{Lemma}
\newtheorem{rem}{Remark}
\newtheorem{define}{Definition}
\def\pc{H}
\def\gen{G}
\def\bh{\zeta}
\def\t{t}
\newcommand{\minW}[2]{w_{\min,#1}^{#2}}
\newcommand{\minWi}[1]{\minW{#1}{}}
\newcommand{\numW}[3]{N_{#1,#2}^{#3}}
\newcommand{\numWi}[2]{\numW{#1}{#2}{}}
\def\E{\mathbb{E}}
\newcommand{\argmin}{\operatornamewithlimits{argmin}}
\newcommand{\bra}[1]{\left(#1\right)}
\providecommand{\OO}[1]{\operatorname{O}\bigl(#1\bigr)}
\title{Linear Error Correcting Codes with Anytime Reliability}
\author{Ravi Teja Sukhavasi and Babak Hassibi
\thanks{Ravi Teja Sukhavasi is a graduate student with the department of Electrical Engineering, California Institute of Technology, Pasadena, USA
        {\tt\small teja@caltech.edu}}%
\thanks{Babak Hassibi is a faculty with the department of Electrical Engineering, California Institute of Technology,
        Pasadena, USA
        {\tt\small hassibi@caltech.edu}}%
\thanks{This work was supported in part by the National Science Foundation under grants CCF-0729203, CNS-0932428 and CCF-1018927, by the Office of Naval Research under the MURI grant N00014-08-1-0747, and by Caltech's Lee Center for Advanced Networking.}
}
\begin{document}

\maketitle
\thispagestyle{empty}
\pagestyle{empty}

\begin{abstract}
We consider rate $R=\frac{k}{n}$ causal linear codes that map a sequence of $k$-dimensional binary vectors $\{b_t\}_{t=0}^\infty$ to a sequence of $n$-dimensional binary vectors $\{c_t\}_{t=0}^\infty$, such that each $c_t$ is a function of $\{b_\tau\}_{\tau=0}^t$. Such a code is called anytime reliable, for a particular binary-input memoryless channel, if at each time instant $t$, and for all delays $d\geq d_o$, the probability of error $P\bra{\hat{b}_{t-d|t}\neq b_{t-d}}$ decays exponentially in $d$, i.e., $P\bra{\hat{b}_{t-d|t}\neq b_{t-d}} \leq \eta 2^{-\beta nd}$, for some $\beta > 0$. Anytime reliable codes are useful in interactive communication problems and, in particular, can be used to stabilize unstable plants across noisy channels. Schulman proved the existence of such codes which, due to their structure, he called tree codes in \cite{Schulman}; however, to date, no explicit constructions and tractable decoding algorithms have been devised. In this paper, we show the existence of anytime reliable \lq\lq linear\rq\rq\phantom{} codes with \lq\lq high probability\rq\rq, i.e., suitably chosen random linear causal codes are anytime reliable with high probability. The key is to consider time-invariant codes (i.e., ones with Toeplitz generator and parity check matrices) which obviates the need to union bound over all times. For the binary erasure channel we give a simple ML decoding algorithm whose average complexity is constant per time iteration and for which the probability that complexity at a given time $t$ exceeds $KC^3$ decays exponentially in $C$. We show the efficacy of the method by simulating the stabilization of an unstable plant across a BEC, and remark on the tradeoffs between the utilization of the communication resources and the control performance. 
\end{abstract}


\section{Introduction}
\label{sec: Introduction}
Shannon's information theory, in large part, is concerned with one-way communication of a message, that is available in its entirety, over a noisy communication network. There are increasingly many applications such as networked control systems \cite{ncs} and distributed computing \cite{Schulman} where communication is not one-way but interactive. A networked control system is characterized by the measurement unit begin separated from the control unit by a communication channel. The interactive nature of communication can be understood by observing that the measurements to be encoded are determined by the control inputs, which in turn, are determined by the encoded measurements received by the controller. Block encoding of the measurements is not applicable anymore because the controller needs real time information about the system so that an appropriate control input can be applied. This is especially critical when the system being controlled is open loop unstable. Any encoding-decoding delay translates into the system growing increasingly unstable. Hence, the desired reliability of communication is determined by the quality of the control input needed to stabilize the system. 

In the context of rate-limited deterministic channels, significant progress has been made (see eg.,\cite{Nair, Matveev}) in understanding the bandwidth requirements for stabilizing open loop unstable systems. When the communication channel is stochastic, \cite{Sahai} provides a necessary and sufficient condition on the communication reliability needed over a channel that is in the feedback loop of an unstable scalar linear process, and proposes the notion of anytime capacity as the appropriate figure of merit for such channels. In essense, the encoder is causal and the probability of error in decoding a source symbol that was transmitted $d$ time instants ago should decay exponentially in the decoding delay $d$. 

Although the connection between communication reliability and control is clear, very little is known about error-correcting codes that can achieve such reliabilities. Prior to the work of \cite{Sahai}, and in a different context, \cite{Schulman} proved the existence of codes which under maximum likelihood decoding achieve such reliabilities and referred to them as tree codes. Note that any real-time error correcting code is causal and since it encodes the entire trajectory of a process, it has a natural tree structure to it. \cite{Schulman} proves the existence of nonlinear tree codes and gives no explicit constructions and/or efficient decoding algorithms. Much more recently \cite{Ostrovsky} proposed efficient error correcting codes for unstable systems where the state grows only polynomially large with time.  So, for linear unstable systems that have an exponential growth rate, all that is known in the way of error correction is the existence of tree codes which are, in general, non-linear. When the state of an unstable scalar linear process is available at the encoder, \cite{Yuksel} and \cite{Simsek} develop encoding-decoding schemes that can stabilize such a process over the binary symmetric channel and the binary erasure channel respectively. But when the state is available only through noisy measurements, little is known in the way of stabilizing an unstable scalar linear process over a stochastic communication channel.

The subject of error correcting codes for control is in its relative infancy, much as the subject of block coding was after Shannon's seminal work in \cite{Shannon}. So, a first step towards realizing practical encoder-decoder pairs with anytime reliabilities is to explore linear encoding schemes. We consider rate $R=\frac{k}{n}$ causal linear codes which map a sequence of $k$-dimensional binary vectors $\{b_\tau\}_{\tau=0}^\infty$ to a sequence of $n-$dimensional binary vectors $\{c_\tau\}_{\tau=0}^\infty$ where $c_t$ is only a function of $\{b_\tau\}_{\tau=0}^t$. Such a code is anytime reliable if at all times $t$ and delays $d\geq d_o$, $P\bigl(\hat{b}_{t-d|t}\neq b_{t-d}\bigr) \leq \eta 2^{-\beta nd}$ for some $\beta > 0$. We show that linear tree codes exist and further, that they exist with a high probability. For the binary erasure channel, we propose a maximum likelihood decoder whose average complexity of decoding is constant per each time iteration and for which the probability that the complexity at a given time $t$ exceeds $KC^3$ decays exponentially in $C$. This allows one to stabilize a partially observed unstable scalar linear process over a binary erasure channel and to the best of the authors' knowledge, this has not been done before. 

In Section \ref{sec: Background}, we present some background and motivate the need for anytime reliability with a simple example. In Section \ref{sec: Sufficient Condition}, we come up with a sufficient condition for anytime reliability in terms of the weight distribution of the code. In Section \ref{sec: Existence}, we introduce the ensemble of time invariant codes and use the results from Section \ref{sec: Sufficient Condition} to prove that time invariant codes with anytime reliability exist with a high probability. In Section \ref{sec: DecodingBEC}, we present a simple decoding algorithm for the BEC and present simulations in Section \ref{sec: Simulations} to demonstrate the efficacy of the algorithm.
\section{Background and Problem Setup}
\label{sec: Background}
Owing to the duality between estimation and control, the essential complexity of stabilizing an unstable process over a noisy communication channel can be captured by studying the open loop estimation of the same process. So, we will illustrate the kind of communication reliability needed for control by analyzing the open loop estimation of the following random walk.

\textit{A toy example: }Consider estimating the following random walk, $x_{\t+1} = \lambda x_{\t} + w_\t$, where $w_t = \pm 1$ w.p $\frac{1}{2}$, $x_0 = 0$ and $|\lambda| > 1$. Suppose an observer observes $x_\t$ and communicates over a noisy communication channel to an estimator. The observer clearly needs to communicate one bit\footnote{If the channel is binary input, it should clearly allow at least one usage for each time step of the system evolution} telling whether $w_\t$ is $+1$ or $-1$. Now the estimator's estimate of the state, $\hat{x}_{\t+1|\t}$, is given by
\begin{align}
\label{eq: example}
 \hat{x}_{\t+1|\t} = \sum_{j=0}^\t\lambda_{\t-j}\hat{w}_{j|\t}
\end{align}
Suppose $P_{d,\t}^e = P\bra{\argmin_{j}(\hat{w}_{j|\t}\neq \hat{w}_{j}) = \t-d+1}$, i.e., the position of the earliest erroneous $\hat{w}_{j|t}$ is at time $j=t-d+1$. From \eqref{eq: example}, we can write $\E\bigl|x_{\t+1}-\hat{x}_{\t+1|\t}\bigr|^2$ as
\begin{align*}
\sum_{w_{0:\t},\hat{w}_{0:\t|\t}}P\bra{w_{0:\t},\hat{w}_{0:\t|\t}}\biggl| \sum_{j=1}^n\lambda^{\t-j}(w_j-\hat{w}_{j|\t})\biggr|^2 =\sum_{d\leq \t}P_{d,\t}^e \biggl| \sum_{j=\t-d+1}^\t\lambda^{\t-j}(w_j-\hat{w}_{j|\t})\biggr|^2 \leq \frac{4}{(|\lambda|-1)^2}\sum_{d\leq \t}P_{d,\t}^e|\lambda|^{2d}
\end{align*}
Clearly, a sufficient condition for $\limsup_\t\E\left|x_{\t+1}-\hat{x}_{\t+1|\t}\right|^2$ to be finite  is as follows
\begin{align}
 P_{d,\t}^e \leq |\lambda|^{-(2+\delta)d}\,\,\, \forall\,\,\, d\geq d_o,\,\,\, \t > \t_o\,\,\, \text{and } \delta > 0
\end{align}
where $d_o$ and $\t_o$ are constants that do no depend on $\t,d$. In the above example, noise is discrete valued. If it is not discrete, e.g., a uniform random variable on a bounded interval, then apart from the reliability criterion above, one would also need a minimum rate of communication that depends on the size of $\log_2|\lambda|$. But codes that communicate at a positive rate and satisfy the above reliability criterion have been more or less elusive. 
\begin{figure}[h]
\centering
 \includegraphics[scale=0.4]{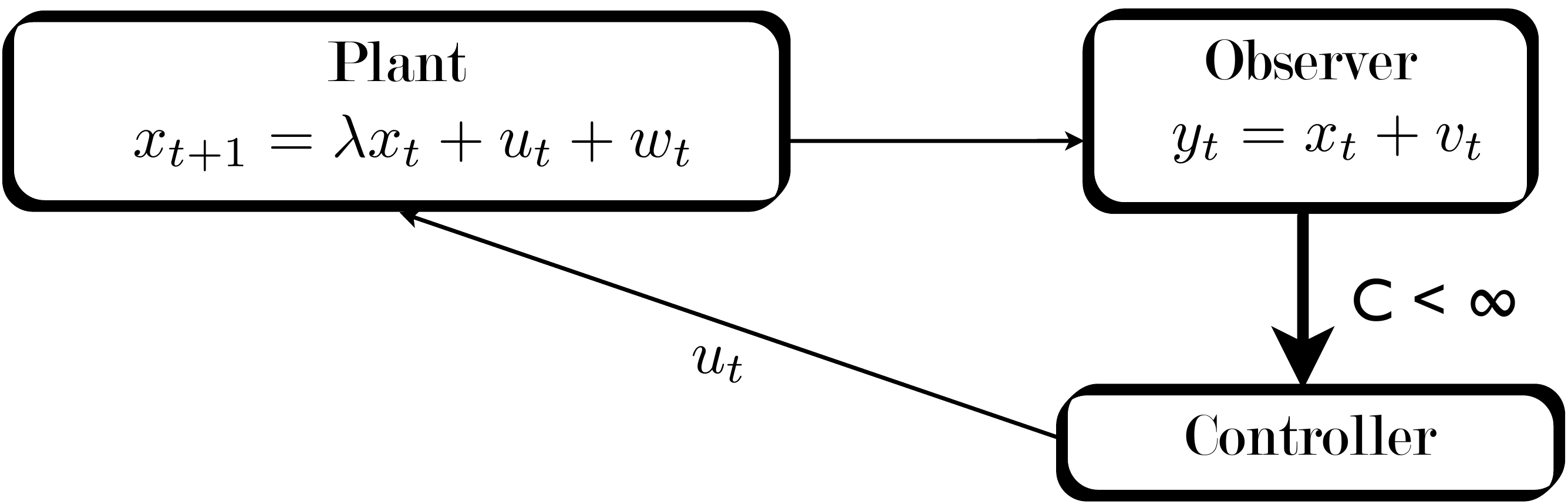}
\caption{System model}
\label{fig: system_model}
\end{figure}
Consider the following unstable scalar linear process
\begin{align}
\label{eq: system_model}
 x_{\t+1} = \lambda x_\t + u_\t + w_\t,\quad y_\t = x_\t + v_\t
\end{align}
where $|\lambda| > 1$, $u_\t$ is the control input and, $|w_\t| < \frac{W}{2}$ and $|v_\t| < \frac{V}{2}$ are bounded process and measurement noise variables. The measurements $\{y_\t\}$ are made by an observer while the control inputs $\{u_\t\}$ are applied by a remote controller that is connected to the observer by a noisy communication channel. Naturally, the measurements $y_{0:\t-1}$ will need to be encoded by the observer to provide protection from the noisy channel while the controller will need to decode the channel output to estimate the state $x_\t$ and apply a suitable control input $u_\t$. This can be accomplished by employing a channel encoder at the observer and a decoder at the controller. For simplicity, we will assume that the channel input alphabet is binary. Suppose one time step of system evolution in \eqref{eq: system_model} corresponds to $n$ channel uses\footnote{In practice, the system evolution in \eqref{eq: system_model} is obtained by discretizing a continuous time differential equation. So, the interval of discretization could be adjusted to correspond to an integer number of channel uses, provided the channel use instances are close enough.}. Then, at each instant of time $\t$,  the operations performed by the observer, the channel encoder,  the channel decoder and the controller can be described as follows. The observer generates a $k-$bit message, $b_\t\in\{0,1\}^k$, that is a causal function of the measurements, i.e., it depends only on $y_{0:\t}$. Then the channel encoder causally encodes $b_{0:\t} \in \{0,1\}^{k\t}$ to generate the $n$ channel inputs $c_\t\in\{0,1\}^n$. Note that the rate of the channel encoder is $R = k/n$. Denote the $n$ channel outputs corresponding to $c_\t$ by $z_\t \in \mathcal{Z}^n$, where $\mathcal{Z}$ denotes the channel output alphabet. Using the channel outputs received so far, i.e., $z_{0:\t}\in \mathcal{Z}^{n\t}$, the channel decoder generates estimates $\{\hat{b}_{\tau|\t}\}_{\tau \leq \t}$ of $\{b_\tau\}_{\tau\leq \t}$, which, in turn, the controller uses to generate the control input $u_{\t+1}$. This is illustrated in Fig. \ref{fig: operation}. Note that we do not assume any channel feedback. 
\begin{figure}
\centering
\includegraphics[scale=0.4]{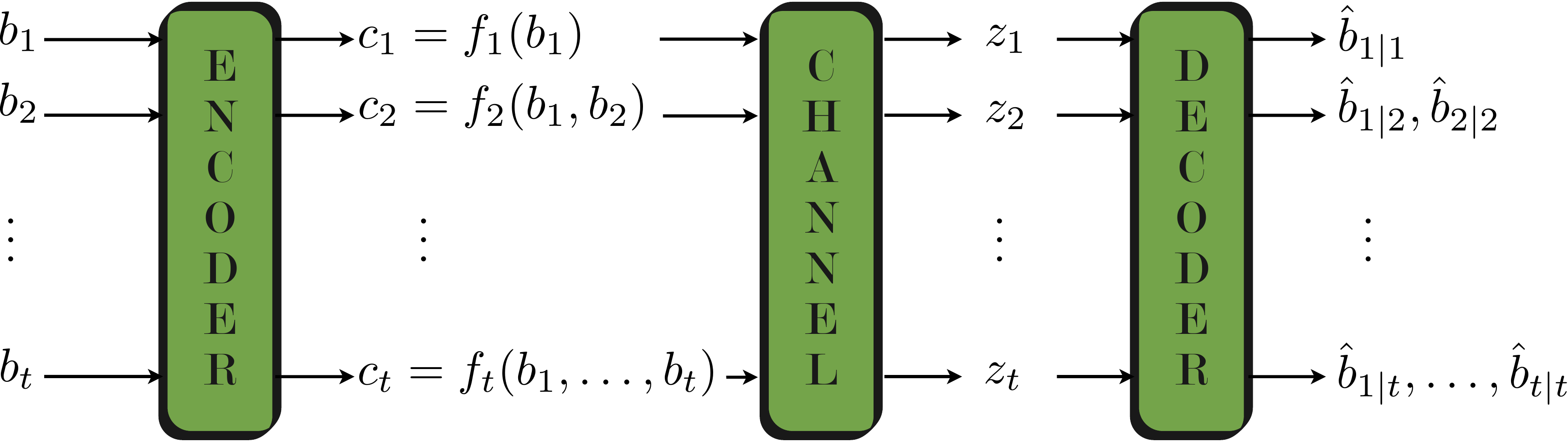}
\caption{Causal encoding and decoding}
\label{fig: operation}
\end{figure}
Then using the lattice quantizer argument presented in \cite{Sahai}, in the limit of large $n$ and large $\lambda$, we have the following sufficient condition on the performance of the encoder-decoder pair so that the unstable process in \eqref{eq: system_model} can be stabilized
\begin{lem}[Theorem 5.2 \cite{Sahai}]
\label{lem: Sahai}
 It is possible to control the unstable scalar process \eqref{eq: system_model} over a noisy communication channel so that $\limsup_t\E|x_t|^m < \infty$ if, for some rate $R > \frac{1}{n}\log_2|\lambda|$ and exponent $\beta > \frac{m}{n}\log_2|\lambda|$, we have
\begin{align*}
 P\bra{\min\{\tau:\hat{b}_{\tau|\t}\neq b_\tau\}=\t-d+1} \leq \eta 2^{-\beta nd},\,\forall\,\,d\geq d_o,\,t > t_o
\end{align*}
where $d_o$ and $t_o$ are constants independent of $d,t$.
\end{lem}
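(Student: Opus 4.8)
The plan is to run the lattice-quantizer reduction of \cite{Sahai}, whose skeleton is already visible in the toy example above. \textbf{Step 1 (reduce control to estimation).} Since the controller knows its entire control history, stabilizing \eqref{eq: system_model} in $m$-th moment amounts to producing at the controller an estimate $\hat x_{\t|\t}$ of the state with $\sup_\t\E|x_\t-\hat x_{\t|\t}|^m<\infty$: one applies the deadbeat input $u_\t=-\lambda\hat x_{\t|\t}$, so that $x_{\t+1}=\lambda(x_\t-\hat x_{\t|\t})+w_\t$ inherits the bound. After the standard change of variables that removes the control (see \cite{Sahai}), the object to be estimated is the uncontrolled recursion $\tilde x_{\t+1}=\lambda\tilde x_\t+w_\t$, and the problem becomes: pair a causal rate-$R$ source description of $\tilde x_\t$ with the given anytime channel code and bound the induced estimation error.

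\textbf{Step 2 (a recursive quantizer with bounded error in the error-free case).} I would have the observer maintain an interval $I_\t$ that, conditioned on every transmitted bit having been decoded correctly so far, is guaranteed to contain $\tilde x_\t$. It updates $I_\t$ by forming $\lambda I_\t$, enlarging it by $W$ (for the new $w_\t$) and by a term of order $V$ (for the fresh noisy sample $y_\t$), partitioning the result into $2^{k}$ congruent subintervals, and transmitting $b_\t\in\{0,1\}^{k}$ equal to the index of the subinterval containing its point estimate. Since $R>\frac1n\log_2|\lambda|$ we have $2^{k}=2^{nR}>|\lambda|$, so the per-step contraction $2^{-nR}$ strictly beats the per-step expansion $|\lambda|$; taking $\lambda$ large makes the absolute terms $W,V$ negligible against $|\lambda|$, and one obtains $\mathrm{diam}(I_\t)\le\Delta$ for all $\t$, for a fixed $\Delta=\Delta(\lambda,W,V,R)$. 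When $\hat b_{\tau|\t}=b_\tau$ for all $\tau\le\t$, the controller reconstructs exactly the same nested subintervals, so $|\tilde x_\t-\hat x_{\t|\t}|\le\Delta$ and hence $|x_\t|\le c_0$ uniformly in $\t$.

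\textbf{Step 3 (error propagation and the moment sum).} If instead $\min\{\tau:\hat b_{\tau|\t}\neq b_\tau\}=\t-d+1$, then at time $\t-d+1$ the controller picks a wrong but equal-sized subinterval, an error of size at most $\Delta$, and such a branch-point error cannot be repaired by the later correct refinements; over the remaining $d-1$ steps the unstable dynamics amplify it by at most $|\lambda|$ per step, exactly as in the toy computation, so $|x_\t|\le c_1|\lambda|^{d}$ with $c_1$ independent of $\t,d$. Using the hypothesis of the lemma,
\begin{align*}
\E|x_\t|^m \;\le\; c_0^m+c_1^m\sum_{d\geq 1}P\bra{\min\{\tau:\hat b_{\tau|\t}\neq b_\tau\}=\t-d+1}\,|\lambda|^{md}\;\le\; C_0+C_1\sum_{d\geq d_o}2^{-(\beta n-m\log_2|\lambda|)d},
\end{align*}
and the geometric series converges precisely when $\beta n>m\log_2|\lambda|$, i.e. $\beta>\frac{m}{n}\log_2|\lambda|$; the right-hand side is independent of $\t$, so $\limsup_\t\E|x_\t|^m<\infty$.

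I expect the real obstacle to be Step 2. With only the noisy $y_\t$ available (not $x_\t$), one must show that the enlargement needed to absorb the new sample together with the residual quantization error still leaves a net contraction after multiplication by $|\lambda|^{-1}2^{nR}>1$ --- this is precisely where the strict rate inequality $R>\frac1n\log_2|\lambda|$ and the large-$\lambda$ regime are used, and it is the content of the zooming lattice-quantizer construction of \cite{Sahai}. Steps 1 and 3 are then the elementary reduction and the geometric bookkeeping above.
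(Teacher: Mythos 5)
There is a genuine gap, and it sits exactly where you guessed it would: Step 2. The quantizer you describe is \emph{adaptive at the observer}: the observer maintains a shrinking interval $I_\t$ supposed to contain $\tilde x_\t$ (or, equivalently, $x_\t$ after your Step-1 change of variables), and at each step partitions $I_\t$ and sends the label of the sub-cell. This cannot be implemented in the setting of the lemma because the paper explicitly assumes \textbf{no channel feedback}. The observer sees only $y_\t = x_\t + v_\t$; it does not know the controls $u_{0:\t-1}$ that were actually applied, since those depend on the realized channel noise, which the observer never observes. Consequently (i) the observer cannot perform the change of variables at its end: writing $x_\t = \tilde x_\t + e_\t$ with $e_\t = \sum_{j<\t}\lambda^{\t-1-j}u_j$, the observer has no way to subtract the control term $e_\t$ from $y_\t$ to obtain a noisy measurement of $\tilde x_\t$, and therefore has no way to locate $\tilde x_\t$ inside its alleged interval $I_\t$; and (ii) even working with $x_\t$ directly, the observer's recursive $I_\t$ is valid only under the hypothesis that all past decoding was correct --- a hypothesis the observer cannot verify. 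When a decoding error does occur, the controller's $u_\tau$ diverges from what the observer's recursion presumes, $x_\tau$ leaves the observer's $I_\tau$, and the transmitted cell labels $b_\tau$ from then on refer to cells of an interval that does not contain the true state, so the scheme breaks down in exactly the event whose probability Step 3 is supposed to bound. Deferring this to \cite{Sahai} does not help, because the zooming construction you sketch is the one used for noiseless or feedback channels, not the one Sahai uses here.

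What Sahai's Theorem 5.2 actually relies on (and what this paper uses in its simulation section) is a \textbf{non-adaptive, memoryless, modular} quantizer at the observer: an $L$-regular lattice of fixed width $\delta > V$, with the observer transmitting the bin index of $y_\t$ reduced modulo $L$, using $k \approx \log_2 L$ bits. The observer carries no state and assumes nothing about the control. The disambiguation is done entirely at the controller, which \emph{does} have the correct side information --- it knows the $u_{0:\t}$ it applied and its current decoded bits --- and can therefore bound $y_{\t+1}$ to a window of at most $L = (2|\lambda|\delta + W + V)/\delta$ consecutive bins whenever its running estimate of $x_\t$ is within $\delta$, which is what resolves the modular ambiguity. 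The rate condition $R > \tfrac1n\log_2|\lambda|$ is what makes $2^{nR}\gtrsim L$ feasible in the large-$\lambda$ limit. With that quantizer in place, your Step 1 (reduction to estimation via the controller's known controls) and Step 3 (amplification by $|\lambda|^d$ after a delay-$d$ error, and the geometric sum requiring $\beta n > m\log_2|\lambda|$) are the right skeleton and match the paper's use of the lemma. So the proposal is correct in its rate/exponent bookkeeping but wrong in the one step it identifies as decisive: the observer-side adaptive zooming interval must be replaced by the modular lattice quantizer, precisely because the observer is ignorant of the control and cannot be cured of that ignorance without feedback.
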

In what follows, we will demonstrate causal linear codes which under maximum likelihood decoding achieve such exponential reliabilities. 
\section{Linear Anytime Codes - A Sufficient Condition}
\label{sec: Sufficient Condition}
As discussed earlier, a first step towards developing practical encoding and decoding schemes for automatic control is to study the existence of linear codes with anytime reliability. We will begin by defining a causal linear code.
\begin{define}[Causal Linear Code]
 A causal linear code is a sequence of linear maps $f_\tau:\{0,1\}^{k\tau}\mapsto \{0,1\}^n$ and hence can be represented as
\begin{align}
 f_\tau(b_{1:\tau}) = \gen_{\tau 1}b_1 + \gen_{\tau 2}b_2 + \ldots + \gen_{\tau \tau}b_\tau
\end{align}
where $\gen_{ij}\in\{0,1\}^{n\times k}$
\end{define}
We denote $c_\tau \triangleq f_\tau(b_{1:\tau})$. Note that a tree code is a more general construction where $f_\tau$ need not be linear. Also note that the associated code rate is $R = \frac{k}{n}$.  One can alternately represent a causal linear code by an infinite dimensional block lower triangular generator matrix $\mathbb{G}_{n,R}$ or equivalently as an infinite dimensional block lower triangular parity check matrix, $\mathbb{H}_{n,R}$. 
\begin{align}
\label{eq: paritycheck}
 \mathbb{G}_{n,R} = \left[\begin{array}{ccccc}
				\gen_{11} & 0 & \ldots & \ldots & \ldots \\
				\gen_{21} & \gen_{22} & 0 & \ldots & \ldots \\
				\vdots & \vdots & \ddots & \vdots & \vdots \\
				\gen_{\tau 1} & \gen_{\tau 2} & \ldots & \gen_{\tau\tau} & 0\\
				\vdots & \vdots & \vdots & \vdots & \ddots
				\end{array}\right],\,\,\,\, \mathbb{H}_{n,R} = \left[\begin{array}{ccccc}
				\pc_{11} & 0 & \ldots & \ldots & \ldots \\
				\pc_{21} & \pc_{22} & 0 & \ldots & \ldots \\
				\vdots & \vdots & \ddots & \vdots & \vdots \\
				\pc_{\tau 1} & \pc_{\tau 2} & \ldots & \pc_{\tau\tau} & 0\\
				\vdots & \vdots & \vdots & \vdots & \ddots
				\end{array}\right]
\end{align}
where $\pc_{ij}\in\{0,1\}^{\overline{n}\times n}$ and $\overline{n}=n(1-R)$\footnote{While for a given generator matrix, the parity check matrix is not unique, when $G_{n,R}$ is block lower, it is easy to see that $H_{n,R}$ can also be chosen to be block lower.}. In fact, we present all our results in terms of the parity check matrix. Before proceeding further, it is useful to introduce some notation

\subsection{Notation}
 \begin{enumerate}
\item $\mathbb{H}_{n,R}^t \triangleq \overline{n}t\times nt\text{ leading principal minor of }\mathbb{H}_{n,R}$
\item $\mathcal{C}_t \triangleq \left\{c\in\{0,1\}^{nt}: \mathbb{H}_{n,R}^tc = 0\right\}$
\item $\mathcal{C}_{t,d} \triangleq \left\{c\in\mathcal{C}_t: c_{\tau<t-d+1}=0,\,\,c_{t-d+1}\neq 0\right\}$
\item $\numW{w}{d}{t} \triangleq \left|\{c\in\mathcal{C}_{t,d}:\|c\|=w\}\right|$
\item $\minW{d}{t} \triangleq \argmin_{w}(\numW{w}{d}{t}\neq 0)$
\item $P_{t,d}^e \triangleq  P\bra{\min\{\tau:\hat{b}_{\tau|\t}\neq b_\tau\}=\t-d+1}$
\label{eq: notation}
 \end{enumerate}
where $\|c\|$ denotes the Hamming weight of $c$.
\subsection{A Sufficient Condition}
The objective is to study the existence of causal linear codes which under ML decoding guarantee
\begin{align}
 P_{d,t}^e \leq \eta 2^{-\beta d},\,\,\,\forall \,\,\,t,\,\,d\geq d_o
\end{align}
where $d_o$ is a constant independent of $d,t$. In what follows, we will develop a sufficient condition for a linear code to be anytime reliable in terms of its weight distribution. Suppose the decoding instant is $\t$ and without loss of generality, assume that the all zero codeword is transmitted, i.e., $c_{\tau}=0$ for $\tau\leq \t$. We are interested in the error event where the earliest error in estimating $b_\tau$ happens at $\tau =\t-d+1$, i.e., $\hat{b}_{\tau|\t}=0$ for all $\tau < \t-d+1$ and $\hat{b}_{\t-d+1|\t}\neq 0$. Note that this is equivalent to the ML codeword, $\hat{c}$, satisfying $\hat{c}_{\tau < \t-d+1} = 0$ and $\hat{c}_{\t-d+1}\neq 0$, and $\mathbb{H}_{n,R}^t$ having full rank so that $\hat{c}$ can be uniquely mapped to a transmitted sequence $\hat{b}$. Then we have
\begin{subequations}
\begin{align}
P_{t,d}^e =  P\left[\bigcup_{c\in\mathcal{C}_{t,d}}0\text{ is decoded as }c\right] \leq \sum_{c\in\mathcal{C}_{t,d}}P\bra{0\text{ is decoded as }c}\label{eq: weight1}
\end{align}
\end{subequations}
Now, it is well known (for eg, see \cite{Shamai}) that, under maximum likelihood decoding, $P\bra{0\text{ is decoded as }c}\leq \bh^{\|c\|}$, where $\bh$ is the Bhattacharya parameter, i.e., 
\begin{align*}
 \bh = \int\limits_{-\infty}^{\infty}\sqrt{p(z|X=1)p(z|X=0)}dz
\end{align*}
where, $z$ and $X$ denote the channel output and input respectively. From \eqref{eq: weight1}, it follows that
\begin{align*}
 P_{t,d}^e \leq  \sum_{\minW{d}{t}\leq w\leq nd}\numW{w}{d}{t}\bh^w
\end{align*}
If $\minW{d}{t} \geq \alpha nd$ and $\numW{w}{d}{t}\leq 2^{\theta w}$ for some $\theta < \log_2(1/\bh)$, then
\begin{align}
\label{eq: sufficient_condition}
 P_{t,d}^e \leq \eta2^{-\alpha nd(\log_2(1/\bh) - \theta)}
\end{align}
where $\eta = (1-2^{\log_2(1/\bh) - \theta})^{-1}$. So, an obvious sufficient condition for $\mathbb{H}_{n,R}$ can be described in terms of $\minW{d}{t}$ and $\numW{w}{d}{t}$ as follows. For some $\theta < \log_2(1/\bh)$, we need
\begin{subequations}
 \label{eq: weight_distribution}
\begin{align}
 \minW{d}{t} \geq \alpha nd,\quad\numW{w}{d}{t} \leq 2^{\theta w}\,\,\,\forall\,\,\, t,\,\,\,d\geq d_o
\end{align}
where $d_o$ is a constant that is independent of $d,t$. This brings us to the following definition
\begin{define}[Anytime distance and Anytime reliability]
 We say that a code $\mathbb{H}_{n,R}$ has $(\alpha,\theta,d_o)-$anytime distance, if the following hold
\begin{enumerate}
 \item $\mathbb{H}_{n,R}^t$ is full rank for all $t>0$
 \item $\minW{d}{t} \geq \alpha nd$, $\numW{w}{d}{t} \leq 2^{\theta w}$ for all $t > 0$ and $d \geq d_o$.
\end{enumerate}
Also, we say that a code $\mathbb{H}_{n,R}$ is $(R,\beta,d_o)-$anytime reliable if, under ML decoding
\begin{align}
 P_{t,d}^e \leq \eta 2^{-\beta n d},\,\,\,\forall\,\,\,t >0,\,\,\,d\geq d_o
\end{align}
where, $d_o$ is a constant independent of $t,d$.
\end{define}
\end{subequations}

\section{Linear Anytime Codes - Existence}
\label{sec: Existence}

We will begin by proving the existence of such codes over a finite time horizon, $T$, i.e., $P_{d,t}^e \leq \eta 2^{-\beta d},\,\,\,\forall \,\,\,t\leq T,\,\,d\geq d_o$. We will then prove their existence for all time.

\subsection{Finite Time Horizon}
Over a finite time horizon, $T$, a causal linear code is represented by a block lower triangular parity check matrix $\mathbb{H}_{n,R,T}\in\{0,1\}^{\overline{n}T\times nT}$. The following Theorem guarantees the existence of a $\mathbb{H}_{n,R,T}$
 such that \eqref{eq: weight_distribution} is true for all $t\leq T$.
\begin{thm}[Appropriate Weight Distribution]
\label{thm: finiteTimeHorizon}
 For each time $T > 0$, rate $R > 0$, $\alpha < H^{-1}(1-R)$ and $\theta > \log_2(1/(2^{1-R}-1))$, there exists a causal linear code $H(n,k,T)$ that has $(\alpha,\theta,d_o)-$anytime distance, where $d_o$ is a constant independent of $d$, $t$ and $T$.
\end{thm}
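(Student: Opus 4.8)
The strategy is a first-moment (union bound) argument over a suitably constructed random ensemble of block lower-triangular parity-check matrices $\mathbb{H}_{n,R,T}$. First I would fix the natural random ensemble: let each block $\pc_{ij}$ for $j \le i$ be an independent uniformly random matrix in $\{0,1\}^{\overline n \times n}$ (with the diagonal blocks perhaps constrained so that $\mathbb{H}_{n,R}^t$ is full rank — I will come back to this point). For a fixed nonzero $c \in \{0,1\}^{nt}$ with $c_{\tau < t-d+1} = 0$ and $c_{t-d+1} \ne 0$, the event $\mathbb{H}_{n,R}^t c = 0$ is an intersection of $\overline n t$ independent parity checks, each satisfied with probability $1/2$, so $P(c \in \mathcal{C}_t) = 2^{-\overline n t} = 2^{-n(1-R)t}$. (Here I would need to be slightly careful: only the rows of $\mathbb{H}_{n,R}^t$ that actually touch the support of $c$ impose genuine constraints; since $c_{t-d+1}\ne 0$, at least the block-rows $t-d+1,\dots,t$ do, giving a bound $P(c\in\mathcal{C}_{t,d}) \le 2^{-n(1-R)d}$, which is the quantity that should drive the exponential decay in $d$.)

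**Counting and the union bound.** Next I would estimate $\E\,\numW{w}{d}{t}$ by summing $P(c \in \mathcal{C}_{t,d})$ over all $c$ of Hamming weight $w$ supported on the last $d$ blocks with $c_{t-d+1}\ne 0$. The number of such $c$ is at most $\binom{nd}{w}$, so $\E\,\numW{w}{d}{t} \le \binom{nd}{w} 2^{-n(1-R)d} \le 2^{nd H(w/(nd))} 2^{-n(1-R)d}$. To get the two desired conclusions I would split on the range of $w$: (i) for $w < \alpha n d$ with $\alpha < H^{-1}(1-R)$, the exponent $nd\,H(w/(nd)) - n(1-R)d$ is strictly negative and bounded above by $-nd\,\epsilon$ for some $\epsilon>0$, so the expected number of codewords in $\mathcal{C}_{t,d}$ of weight below $\alpha n d$ is exponentially small in $d$; (ii) for $w \ge \alpha n d$, I would instead bound $\binom{nd}{w} \le \binom{nd}{w}$ crudely and show $\E\,\numW{w}{d}{t} \le 2^{-n(1-R)d}\binom{nd}{w} \le 2^{\theta w}$ for $\theta > \log_2(1/(2^{1-R}-1))$ — the point being that $\sum_{w} \binom{nd}{w} x^w = (1+x)^{nd}$, so choosing $x = 2^{-\theta}$ one gets $\sum_w \E\,\numW{w}{d}{t} 2^{-\theta w}\cdot(\cdots)$ controlled by $\big((1+2^{-\theta})\,/\,2^{1-R}\big)^{nd}$, which is summable precisely under the stated threshold on $\theta$. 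Then Markov's inequality plus summing the (geometrically decaying in $d$) failure probabilities over $d \ge d_o$ and over all $t \le T$ makes the total failure probability less than $1$ once $d_o$ is a large enough constant — crucially independent of $T$, because the per-$t$ failure probability is itself summable in $t$ once it decays geometrically in $d$ uniformly. Hence some matrix in the ensemble works.

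**The full-rank issue.** The step I expect to be the main obstacle — or at least the one requiring the most care — is ensuring condition (1) of the anytime-distance definition, that every leading principal minor $\mathbb{H}_{n,R}^t$ has full row rank $\overline n t$, simultaneously with the weight-distribution bounds. A clean way is to note that full row rank of $\mathbb{H}_{n,R}^t$ is implied by each diagonal block $\pc_{tt} \in \{0,1\}^{\overline n \times n}$ having full row rank (so the matrix is block lower triangular with full-rank diagonal blocks in the appropriate sense): one can condition the ensemble on this event, which has probability bounded below by a positive constant per block (in fact $\prod_{j\ge 1}(1-2^{-j}) > 0$ for the relevant dimensions), and check that conditioning only inflates the codeword-probability estimates by a bounded constant factor, which is harmlessly absorbed. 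Alternatively one argues that the "bad weight distribution" event and the "rank-deficient" event together still have probability $<1$. Either way, I would want to verify that the constant $d_o$ emerging from the union bound does not secretly depend on $T$ through this conditioning — it should not, since the rank constraints are local to each block.
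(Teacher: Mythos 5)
Your overall scaffolding (random ensemble, first-moment bound on $\E\,\numW{w}{d}{t}$, the binomial/entropy estimate to get $\alpha<H^{-1}(1-R)$, and the generating-function trick $\sum_w \binom{nd}{w}2^{-\theta w}=(1+2^{-\theta})^{nd}$ which produces the threshold $\theta>\log_2(1/(2^{1-R}-1))$) is sound and recovers the right parameter region; the treatment of the full-rank requirement via fixing (or conditioning on) full-rank diagonal blocks is also fine, and actually matches the paper's use of its Remark on block lower-triangular matrices.

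The gap is in the final union bound over $t\leq T$. You assert that ``the per-$t$ failure probability is itself summable in $t$,'' but with an i.i.d.\ (non-structured) ensemble this is false. For each decoding instant $t$, the bad event is $\{\exists\, d\in[d_o,t]:\ \minW{d}{t}<\alpha nd \text{ or } \numW{w}{d}{t}>2^{\theta w}\}$, and its probability is bounded only by $\sum_{d\geq d_o} \mathrm{poly}(nd)\,2^{-nd\delta} \asymp 2^{-\Omega(nd_o)}$ -- a quantity that is \emph{constant} in $t$, not decaying. (Indeed, for fixed $d$ the distribution of the $d\bar n\times dn$ principal submatrix at time $t$ is the same for every $t$, so the per-$t$ failure probability cannot decay in $t$.) Summing over $t\leq T$ therefore gives a failure probability of order $T\cdot 2^{-\Omega(nd_o)}$, and to push this below $1$ you would need $d_o\gtrsim (\log T)/n$, which contradicts the theorem's requirement that $d_o$ be a constant independent of $T$. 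The paper sidesteps exactly this problem by an inductive construction: it builds $\mathbb{H}_{n,R,T}$ from $\mathbb{H}_{n,R,T-1}$ by appending a single new leading block column $\{\pc_{j1}\}_{j\leq T}$, keeping the old matrix fixed. Then the only \emph{new} bad events introduced at step $T$ are those with $c_1\neq 0$, i.e.\ maximal delay $d=t$, and those have probability $\approx 2^{-nt\delta}$, which \emph{is} summable over $t\geq d_o$, giving a $T$-independent $d_o$. (Separately, for the infinite-horizon statement the paper switches to the Toeplitz ensemble, where $\minW{d}{t}$ and $\numW{w}{d}{t}$ do not depend on $t$ at all, so no union over $t$ is needed.) To repair your argument you would either need to adopt the paper's induction, impose the Toeplitz structure, or find some other way to decorrelate/collapse the union over $t$; the plain i.i.d.\ union bound does not close.
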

$H^{-1}(1-R)$ is the smaller root of the equation $H(x)=1-R$, where $H(.)$ is the binary entropy function. The proof is by induction and is detailed in the Appendix. Theorem \ref{thm: finiteTimeHorizon} proves the existence of finite dimensional causal linear codes, $\mathbb{H}_{n,R,T}$, that are anytime reliable for decoding instants upto time $T$. In the following subsection, we demonstrate the existence of infinite dimensional causal linear codes, $\mathbb{H}_{n,R}$, that are anytime reliable for all decoding instants. We also show that such codes drawn from an appropriate ensemble are anytime reliable with a high probability.

\subsection{Time Invariant Codes}

Consider causal linear codes with the following Toeplitz structure
\begin{align*}
   \mathbb{H}_{n,R}^{TZ} = \left[\begin{array}{ccccc}
				\pc_{1} & 0 & \ldots & \ldots & \ldots \\
				\pc_{2} & \pc_{1} & 0 & \ldots & \ldots \\
				\vdots & \vdots & \ddots & \vdots & \vdots \\
				\pc_{\tau} & \pc_{\tau-1} & \ldots & \pc_{1} & 0\\
				\vdots & \vdots & \vdots & \vdots & \ddots
				\end{array}\right]
\end{align*}
The superscript $TZ$ in $\mathbb{H}_{n,R}^{TZ}$ denotes \lq Toeplitz'. $\mathbb{H}_{n,R}^{TZ}$ is obtained from $\mathbb{H}_{n,R}$ in \eqref{eq: paritycheck} by setting $\pc_{ij} = \pc_{i-j+1}$ for $i\geq j$. Due to the Toeplitz structure, we have the following invariance, $\minW{d}{t} = \minW{d}{t'}$ and $\numW{w}{d}{t} = \numW{w}{d}{t'}$ for all $t,t'$. The code $\mathbb{H}_{n,R}^{TZ}$ will be referred to as a time-invariant code. The notion of time invariance is analogous to the convolutional structure used to show the existence of infinite tree codes in \cite{Schulman}. This time invariance allows one to prove that such codes which are anytime reliable are abundant. 
\begin{define}[The ensemble $\mathbb{TZ}_p$]
 The ensemble $\mathbb{TZ}_p$ of time-invariant codes, $\mathbb{H}_{n,R}^{TZ}$, is obtained as follows, $\pc_1$ is any full rank binary matrix and for $\tau \geq 2$, the entries of $H_\tau$ are chosen i.i.d according to Bernoulli($p$), i.e., each entry is 1 with probability $p$ and 0 otherwise.
\end{define}
For the ensemble $\mathbb{TZ}_p$, we have the following result
\begin{thm}[Abundance of time-invariant codes]
 \label{thm: Toeplitz}
For each $R > 0$, $\alpha < H^{-1}\left[(1-R)\log_2(1/(1-p))\right]$ and $\theta > -\log_2\left[(1-p)^{-(1-R)}-1\right]$, we have
\begin{align}
 P\bra{\mathbb{H}_{n,R}^{TZ}\text{ has }(\alpha,\theta,d_o)-\text{anytime distance}} \geq 1-2^{-\Omega(nd_o)}
\end{align}
\end{thm}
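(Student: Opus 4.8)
The plan is to use the time-invariance to collapse the union bound over pairs $(t,d)$ into one over the delay $d$ alone, and then to run an elementary first-moment argument on the weight enumerator of the random matrix.

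\emph{Reduction to a $d$-indexed event.} Because $\mathbb{H}_{n,R}^{TZ}$ is block-Toeplitz and block-lower-triangular, for every $t$ a word $c\in\mathcal{C}_{t,d}$ corresponds bijectively to its nonzero tail $\tilde c=(c_{t-d+1},\dots,c_t)\in\{0,1\}^{nd}$ with first block $\tilde c_1=c_{t-d+1}\neq 0$, and the parity-check equation $\mathbb{H}_{n,R}^t c=0$ becomes $\mathbb{H}^{d}\tilde c=0$, where $\mathbb{H}^{d}$ is the $\overline{n}d\times nd$ leading submatrix of $\mathbb{H}_{n,R}^{TZ}$ and does \emph{not} depend on $t$. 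Hence $\numWi{w}{d}$ and $\minWi{d}$ depend on $d$ only. Moreover, since $\pc_1$ has full row rank and $\mathbb{H}_{n,R}^{TZ}$ is block-lower-triangular with $\pc_1$ on the diagonal, every leading submatrix $\mathbb{H}_{n,R}^t$ has full row rank deterministically (project a vanishing row combination onto the last block-column and induct downward), so the rank requirement in the definition is automatic. It therefore suffices to show that, with probability $1-2^{-\Omega(nd_o)}$, for all $d\ge d_o$ one has $\minWi{d}\ge\alpha nd$ and $\numWi{w}{d}\le 2^{\theta w}$ for every $w$.

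\emph{The basic estimate.} Fix $\tilde c\in\{0,1\}^{nd}$ with $\tilde c_1\neq 0$. Reveal the random blocks $\pc_2,\pc_3,\dots$ one at a time; conditioned on $\pc_2,\dots,\pc_{i-1}$, the $i$-th block ($i\ge 2$) of the equation $\mathbb{H}^{d}\tilde c=0$ reads $\pc_i\tilde c_1=v_i$ for a determined $v_i\in\{0,1\}^{\overline{n}}$. Since the entries of $\pc_i$ are i.i.d. Bernoulli$(p)$ and $\|\tilde c_1\|\ge 1$, each coordinate of $\pc_i\tilde c_1$ equals a prescribed bit with probability at most $\tfrac12\bigl(1+(1-2p)^{\|\tilde c_1\|}\bigr)\le 1-p$ (for $p\le\tfrac12$), so this block-constraint has conditional probability at most $(1-p)^{\overline{n}}$. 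Multiplying over $i=2,\dots,d$ and bounding the deterministic $i=1$ block by $1$ gives $P\bigl[\mathbb{H}^{d}\tilde c=0\bigr]\le (1-p)^{\overline{n}(d-1)}$, uniformly in $\tilde c$; since there are at most $\binom{nd}{w}$ choices of $\tilde c$ of weight $w$ with nonzero first block, $\E\,\numWi{w}{d}\le \binom{nd}{w}(1-p)^{\overline{n}(d-1)}$.

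\emph{Union/Markov bounds and choice of $d_o$.} For the minimum distance, $P[\minWi{d}<\alpha nd]\le\sum_{w<\alpha nd}\E\,\numWi{w}{d}\le 2^{ndH(\alpha)}(1-p)^{\overline{n}(d-1)}$ (using $\alpha<H^{-1}(\cdot)\le\tfrac12$), and with $\overline{n}=n(1-R)$ its base-$2$ exponent is $nd\bigl(H(\alpha)-(1-R)\log_2\tfrac1{1-p}\bigr)+O(n)$, the bracket being strictly negative precisely because $\alpha<H^{-1}[(1-R)\log_2\tfrac1{1-p}]$. For the growth rate, $\{\exists w:\numWi{w}{d}>2^{\theta w}\}\subseteq\{\sum_w\numWi{w}{d}2^{-\theta w}>1\}$, so by Markov and the basic estimate $P\le\E\sum_w\numWi{w}{d}2^{-\theta w}\le(1-p)^{\overline{n}(d-1)}\sum_w\binom{nd}{w}2^{-\theta w}=(1-p)^{\overline{n}(d-1)}(1+2^{-\theta})^{nd}$, whose exponent is $nd\bigl(\log_2(1+2^{-\theta})-(1-R)\log_2\tfrac1{1-p}\bigr)+O(n)$, negative exactly when $\theta>-\log_2[(1-p)^{-(1-R)}-1]$. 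Choosing $d_o$ a large enough constant (independent of $d$) so that the $O(n)$ slack and the $(d-1)$-versus-$d$ discrepancy are dominated, both probabilities are $\le 2^{-cnd}$ for a fixed $c>0$ and every $d\ge d_o$; summing these two geometric series over $d\ge d_o$ and adding them yields the stated bound $2^{-\Omega(nd_o)}$.

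\emph{Main obstacle.} The one genuinely structural point is the basic estimate: the Toeplitz form forces each fresh block-row to "see'' only the weight of the first block $\tilde c_1$ — which is merely guaranteed to be $\ge 1$ — rather than the full weight $w$ of $\tilde c$, so the per-block success probability degrades from the ideal $2^{-\overline{n}}$ to $(1-p)^{\overline{n}}$, and it is exactly this that produces the $(1-p)$-dependent thresholds on $\alpha$ and $\theta$; handling the deterministic block $\pc_1$ honestly (bounding its constraint by $1$ rather than relying on it) and choosing $d_o$ large enough for the tails to be summable are the only other points requiring care.
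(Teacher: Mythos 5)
Your proof is correct and yields exactly the theorem's thresholds, but it differs from the paper's argument in two places that are worth noting. For the key per-codeword estimate $P[\mathbb{H}^{d}\tilde c=0]\le(1-p)^{\overline{n}(d-1)}$, the paper vectorizes the constraint into $Ch=0$ for a block-Toeplitz matrix $C$ built from $\tilde c$, argues $C$ has full rank $d\overline{n}$ because $C_{t-d+1}$ does, and then invokes a lemma bounding the Bernoulli$(p)$ mass of a subspace of the appropriate codimension; you instead reveal $\pc_2,\ldots,\pc_d$ sequentially and observe that each block-row constraint reduces to $\pc_i\tilde c_1=v_i$ with $v_i$ measurable in the earlier blocks, then use the parity-bias bound $\tfrac12(1+(1-2p)^{\|\tilde c_1\|})\le 1-p$ coordinatewise. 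Your route is more elementary and sidesteps the fact that, once $h_1$ is fixed, the set of admissible $h'$ is an affine coset rather than a subspace, which the paper's subspace lemma only addresses implicitly. For the weight-enumerator tail, the paper applies Markov per weight $w$ to get $P[\numWi{w}{d}>2^{\theta w}]\le 2^{-\theta w}\E\numWi{w}{d}$ and then optimizes the exponent $\theta x - H(x)+(1-R)\log_2\tfrac1{1-p}$ over $x\ge\alpha$ to identify $\theta^\ast$; you instead bound all $w$ at once via $\{\exists w:\numWi{w}{d}>2^{\theta w}\}\subseteq\{\sum_w\numWi{w}{d}2^{-\theta w}>1\}$, which after Markov collapses to the closed form $(1-p)^{\overline{n}(d-1)}(1+2^{-\theta})^{nd}$ and produces the threshold $\theta>-\log_2[(1-p)^{-(1-R)}-1]$ directly without the interior-maximum calculation. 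The two are equivalent — the paper's optimizer is $x^\ast=1-(1-p)^{1-R}\ge\alpha$, at which the paper's expression equals your closed form — but your generating-function step is shorter and makes the stated $\theta$-threshold self-evident. Everything else (time-invariance collapsing the union over $t$, the deterministic full-rank of all leading submatrices from $\pc_1$ full rank, the $2^{ndH(\alpha)}$ bound on the low-weight union, and the final geometric union over $d\ge d_o$) matches the paper.
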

We can now use this result to demonstrate an achievable region of rate-exponent pairs for a given channel, i.e., the set of rates $R$ and exponents $\beta$ such that one can guarantee $(R,\beta)$ anytime reliability using linear codes. To determine the values of $R$ that will satisfy \eqref{eq: sufficient_condition}, note that we need
\begin{align*}
 \log_2(1/(2^{1-R}-1)) < \log_2(1/\bh) \implies R < 1-\log_2(1+\bh)
\end{align*}
With this observation, we have the following Corollary.
\begin{cor}
\label{cor: thresholdsBEC}
For any rate $R$ and exponent $\beta$ such that
\begin{subequations}
\label{eq: thresholdsBEC}
\begin{align}
 R < 1- \frac{\log_{2}(1+\bh)}{\log_2(1/(1-p))}\quad\text{and}\quad\beta  < H^{-1}(1-R)\bra{\log_2\bra{\frac{1}{\bh}} + \log_2\bigl[(1-p)^{-(1-R)}-1\bigr]}
\end{align}
\end{subequations}
if $\mathbb{H}_{n,R}^{TZ}$ is chosen from $\mathbb{TZ}_p$, then
\begin{align*}
 P\bra{\mathbb{H}_{n,R}^{TZ}\text{ is }(R,\beta,d_o)-\text{anytime reliable}} \geq 1-2^{-\Omega(nd_o)}
\end{align*}
\end{cor}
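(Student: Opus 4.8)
The plan is to obtain Corollary~\ref{cor: thresholdsBEC} by feeding Theorem~\ref{thm: Toeplitz} into the union-bound estimate that produced \eqref{eq: sufficient_condition}. The bridge, already implicit in Section~\ref{sec: Sufficient Condition}, is: \emph{if a code has $(\alpha,\theta,d_o)$-anytime distance and, in addition, $\theta<\log_2(1/\bh)$, then it is $(R,\beta,d_o)$-anytime reliable for every $\beta\leq\alpha\bra{\log_2(1/\bh)-\theta}$.} Indeed, the weight bounds $\minW{d}{t}\geq\alpha nd$ and $\numW{w}{d}{t}\leq 2^{\theta w}$ make the chain $P_{t,d}^e\leq\sum_{\minW{d}{t}\leq w\leq nd}\numW{w}{d}{t}\bh^w\leq\eta\,2^{-\alpha nd(\log_2(1/\bh)-\theta)}\leq\eta\,2^{-\beta nd}$ valid for all $t>0$, $d\geq d_o$, while full rank of the leading minors $\mathbb{H}_{n,R}^t$ — the remaining part of the anytime-distance definition — is exactly what lets the ML codeword be mapped back to a unique $\hat b$. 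For the ensemble $\mathbb{TZ}_p$ this full-rank requirement holds deterministically, since every $\overline{n}t\times nt$ leading minor of $\mathbb{H}_{n,R}^{TZ}$ is block lower triangular with the full-row-rank block $\pc_1$ on its diagonal.

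It remains to choose $(\alpha,\theta)$. Fix $R$, $\beta$, $p$ satisfying \eqref{eq: thresholdsBEC} and write $\bar\alpha=H^{-1}\bigl[(1-R)\log_2(1/(1-p))\bigr]$, $\underline\theta=-\log_2\bigl[(1-p)^{-(1-R)}-1\bigr]$. Theorem~\ref{thm: Toeplitz} delivers $(\alpha,\theta,d_o)$-anytime distance with probability at least $1-2^{-\Omega(nd_o)}$ for \emph{any} $\alpha<\bar\alpha$, $\theta>\underline\theta$; for the bridge I additionally need $\theta<\log_2(1/\bh)$. The window $\underline\theta<\theta<\log_2(1/\bh)$ is nonempty iff $\underline\theta<\log_2(1/\bh)$, and exponentiating shows that this is precisely $R<1-\log_2(1+\bh)/\log_2(1/(1-p))$, the first line of \eqref{eq: thresholdsBEC}. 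As $\alpha\uparrow\bar\alpha$ and $\theta\downarrow\underline\theta$ the attainable exponents $\alpha\bra{\log_2(1/\bh)-\theta}$ fill $\bigl(0,\ \bar\alpha(\log_2(1/\bh)+\log_2[(1-p)^{-(1-R)}-1])\bigr)$, whose upper endpoint is the right-hand side of the second line of \eqref{eq: thresholdsBEC} (it collapses to the stated $H^{-1}(1-R)\bra{\cdots}$ at $p=\tfrac12$, the ensemble underlying Theorem~\ref{thm: finiteTimeHorizon}). Hence any $\beta$ strictly below that endpoint satisfies $\beta\leq\alpha\bra{\log_2(1/\bh)-\theta}$ for some admissible interior pair $(\alpha,\theta)$; invoking Theorem~\ref{thm: Toeplitz} at that pair and then the bridge completes the argument.

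Because all the substance is already inside Theorem~\ref{thm: Toeplitz}, there is no genuine obstacle here; the steps that need care are purely bookkeeping, the main one being (i) the algebraic equivalence between nonemptiness of the $\theta$-window and the first inequality of \eqref{eq: thresholdsBEC}, which is just taking logarithms. Beyond that: (ii) one must check that the constant $d_o$ produced by Theorem~\ref{thm: Toeplitz} depends only on $(\alpha,\theta,R,p)$ and not on $t$ or $d$, so that anytime reliability holds with the required uniformity in $t,d$; and (iii) one should record that the region in \eqref{eq: thresholdsBEC} is \emph{open}, the boundary values $\alpha=\bar\alpha$, $\theta=\underline\theta$ being excluded by Theorem~\ref{thm: Toeplitz}, which is why strict inequalities for $R$ and $\beta$ are needed. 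I would also note that $p$ remains a free design parameter, so one may optimize over $p$ (subject to $(1-R)\log_2(1/(1-p))\leq 1$, so that $H^{-1}[\cdot]$ is defined) and take the union of the resulting $(R,\beta)$ regions for the best achievable trade-off against a given $\bh$.
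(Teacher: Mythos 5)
Your proposal is correct and takes essentially the same route as the paper: the paper obtains the corollary by combining Theorem~\ref{thm: Toeplitz} with the union-bound chain of Section~\ref{sec: Sufficient Condition}, and the short algebraic remark immediately preceding the corollary (that $\theta>\log_2(1/(2^{1-R}-1))$ together with $\theta<\log_2(1/\bh)$ forces $R<1-\log_2(1+\bh)$) is exactly your ``nonempty $\theta$-window'' step, generalized to arbitrary $p$. One further point worth making explicit, which you flag in passing: carried out cleanly, the exponent bound you derive is $\beta<H^{-1}\bigl[(1-R)\log_2(1/(1-p))\bigr]\bigl(\log_2(1/\bh)+\log_2[(1-p)^{-(1-R)}-1]\bigr)$, consistent with Theorem~\ref{thm: Toeplitz}; the printed corollary's factor $H^{-1}(1-R)$ agrees with this only at $p=\tfrac12$, so for general $p$ the corollary as stated appears to carry a typographical slip that your derivation correctly repairs.
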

Note that by choosing $p$ small, we can trade off better rates and exponents with sparser parity check matrices. 

\subsection{Tightening the Union Bound}
The region of achievable rate exponent pairs as described in Corollary \ref{cor: thresholdsBEC} have been obtained by using a simple union bound on the error probability. It is well known that for linear block codes, the union bound does not give an error exponent for rates close to the channel capacity. This appears to be the case in Corollary \ref{cor: thresholdsBEC} as well. For BSC($\epsilon$) with $p=\frac{1}{2}$, the threshold for rate in Corollary \ref{cor: thresholdsBEC} becomes $R < 1-2\log_2(\sqrt{\epsilon}+\sqrt{1-\epsilon})$. It turns out that there is a regime of bit flip probabilities, $\epsilon < \epsilon^*$, where this threshold can be improved as follows.
\begin{thm}[Tighter bounds for BSC($\epsilon$)]
\label{thm: tighterBSC}
Let $\epsilon^*$ be defined as
\begin{align}
  \epsilon^* \triangleq \argmin_{\epsilon > 0}\left\{1 - H(2\epsilon) \leq 1-2\log_2(\sqrt{\epsilon}+\sqrt{1-\epsilon})\right\}
 \end{align}
Consider BSC($\epsilon$) for $\epsilon < \epsilon^*$, then for any rate $R$ and exponent $\beta$ such that
\begin{align*}
 R < 1 - H(2\epsilon),\,\,\,\beta  < KL\bra{\frac{1}{2}H^{-1}(1-R)\| \epsilon}
\end{align*}
if $\mathbb{H}_{n,R}^{TZ}$ is chosen from $\mathbb{TZ}_{\frac{1}{2}}$, then
\begin{align*}
 P\bra{\mathbb{H}_{n,R}^{TZ}\text{ is }(R,\beta,d_o)-\text{anytime reliable}} \geq 1-2^{-\Omega(nd_o)}
\end{align*}
\end{thm}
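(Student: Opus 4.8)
The plan is to combine Theorem~\ref{thm: Toeplitz}, specialized to $p=\tfrac12$, with a channel-specific bound that replaces the codeword-by-codeword union bound leading to \eqref{eq: weight1} by a single large-deviations estimate. The idea is that over BSC($\epsilon$) an ML decoding error whose earliest discrepancy is at time $t-d+1$ forces an atypically large number of bit flips inside the window $W$ of $nd$ coordinates indexed by $\tau\in\{t-d+1,\dots,t\}$, and that the minimum-distance guarantee $\minW{d}{t}\ge\alpha nd$ by itself (the weight-enumerator bound $\numW{w}{d}{t}\le 2^{\theta w}$ will not be used) controls how atypical. Avoiding the sum over codewords is exactly what lets the rate threshold improve.

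First I would fix parameters. The hypothesis $\epsilon<\epsilon^*$ forces $2\epsilon<\tfrac12$ (one checks $\epsilon^*<\tfrac14$, since at $\epsilon=\tfrac14$ the quantity $1-H(2\epsilon)=0$ already falls below $1-2\log_2(\sqrt\epsilon+\sqrt{1-\epsilon})$), so the hypothesis $R<1-H(2\epsilon)$ makes $H^{-1}(1-R)$ — the smaller root of $H(x)=1-R$ — strictly larger than $2\epsilon$, whence $KL\!\left(\tfrac12 H^{-1}(1-R)\,\|\,\epsilon\right)>0$. By continuity and monotonicity of $x\mapsto KL(x/2\,\|\,\epsilon)$ on $x>2\epsilon$, choose $\alpha$ with $2\epsilon<\alpha<H^{-1}(1-R)$ and $KL(\alpha/2\,\|\,\epsilon)>\beta$, and choose any $\theta>\log_2(1/(2^{1-R}-1))$. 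For $p=\tfrac12$ the hypotheses of Theorem~\ref{thm: Toeplitz} are precisely $\alpha<H^{-1}(1-R)$ and $\theta>\log_2(1/(2^{1-R}-1))$, so a code $\mathbb{H}_{n,R}^{TZ}$ drawn from $\mathbb{TZ}_{\frac{1}{2}}$ has $(\alpha,\theta,d_o)$-anytime distance with probability at least $1-2^{-\Omega(nd_o)}$; on that event $\mathbb{H}_{n,R}^t$ is full rank for every $t$ and $\minW{d}{t}\ge\alpha nd$ for all $t>0$ and $d\ge d_o$.

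It then remains to show, deterministically, that any code with $(\alpha,\theta,d_o)$-anytime distance and $\alpha>2\epsilon$ is $(R,\beta,d_o)$-anytime reliable over BSC($\epsilon$). By linearity, assume the all-zero codeword is sent, so the received word is the i.i.d.\ Bernoulli($\epsilon$) error vector $e$. As in \eqref{eq: weight1}, $P_{t,d}^e$ is at most the probability that the ML decoder outputs some $c\in\mathcal{C}_{t,d}$, and any such $c$ obeys $d_H(e,c)\le d_H(e,0)=\|e\|$. Every $c\in\mathcal{C}_{t,d}$ is supported on $W$ and has $\|c\|\ge\minW{d}{t}$, so $d_H(e,c)=\|e_{W^c}\|+\|e_W\oplus c\|\ge\|e_{W^c}\|+\|c\|-\|e_W\|$; together with $\|e\|=\|e_{W^c}\|+\|e_W\|$ this forces $\|e_W\|\ge\|c\|/2\ge\minW{d}{t}/2\ge\tfrac{\alpha}{2}nd$. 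Since $\|e_W\|\sim\mathrm{Binom}(nd,\epsilon)$ and $\alpha/2>\epsilon$, the Chernoff bound for the binomial upper tail gives $P_{t,d}^e\le P\!\left(\mathrm{Binom}(nd,\epsilon)\ge\tfrac{\alpha}{2}nd\right)\le 2^{-nd\,KL(\alpha/2\,\|\,\epsilon)}\le 2^{-\beta nd}$ for all $t>0$ and $d\ge d_o$. Intersecting with the event of the previous paragraph yields the claim, with $\eta=1$.

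The step requiring care is the reduction just described: one must verify, exactly as in the paragraph preceding \eqref{eq: weight1}, that on the error event the ML decoder's output really is a nonzero element of $\mathcal{C}_{t,d}$ supported in $W$ — which relies on $\mathbb{H}_{n,R}^t$ being full rank and block lower triangular, so that $\hat c$ corresponds bijectively to $\hat b$ — and that ties in the ML rule cause no difficulty. Everything after that is the routine binomial tail estimate. Unlike Corollary~\ref{cor: thresholdsBEC}, the argument never sums the pairwise error probabilities over codewords, which is precisely why the rate threshold improves from $1-2\log_2(\sqrt\epsilon+\sqrt{1-\epsilon})$ to $1-H(2\epsilon)$ when $\epsilon<\epsilon^*$.
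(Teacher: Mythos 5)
Your proof is correct and takes essentially the same route as the paper: both rely only on the minimum-distance guarantee $\minW{d}{t}\ge\alpha nd$ supplied by Theorem~\ref{thm: Toeplitz} with $p=\tfrac12$, together with a binomial large-deviations bound on the number of flips in the delay window. The paper phrases this as a typicality split (introducing a slack $\delta$, writing the union-bound sum over weights $\minWi{d}\le w\le 2nd(\epsilon+\delta)$, and then noting the sum is vacuous once $\minWi{d}>2nd(\epsilon+\delta)$), whereas you derive the flip-count inequality $\|e_W\|\ge\tfrac{\alpha}{2}nd$ directly from the ML condition and apply Chernoff — a purely presentational difference.
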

A simple numerical calculation gives $\epsilon^* \approx 0.0753$. Hence for a binary symmetric channel with bit flip probability smaller than $0.0753$, one can get an anytime exponent for rates closer to the Shannon capacity than that suggested by a simple union bound. In general, given a linear block code with a specified weight distribution, there are numerous results in literature (e.g., see the survey \cite{Shamai}) that significantly tighten the union bound. Some of these techniques will directly yield improved thresholds than those obtained in Corollary \ref{cor: thresholdsBEC}. 

\subsection{Stabilizable Region}
Using the thresholds obtained in Corollary \ref{cor: thresholdsBEC} and Theorem \ref{thm: tighterBSC}, we can discuss the range of $|\lambda|$ for which the $m^{th}$ moment of $x_t$ in \eqref{eq: system_model} can be stabilized over some common channels. Using Lemma \ref{lem: Sahai}, an anytime reliable code with rate $R$ and exponent $\beta$ can stabilize the process in \eqref{eq: system_model} for all $\lambda$ such that 
\begin{align*}
 \log_2|\lambda| < \min\{nR,\frac{n\beta}{m}\}
\end{align*}
So, a scalar unstable linear process in \eqref{eq: system_model} can be stabilized over a channel with Bhattacharya parameter $\bh$ provided
\begin{align}
\label{eq: region1}
 \log_2|\lambda| < \log_2|\lambda_{max}| = \sup_{R < R_\bh,\beta < \beta_{\bh,R}}\min\{nR,\frac{\beta n}{m}\}
\end{align}
where $R_\bh = 1 - \log_2(1+\bh)$ and $\beta_{\bh,R} = H^{-1}(1-R)\bra{\log_2\bra{\frac{1}{\bh}} + \log_2\bra{2^{1-R}-1}}$ are obtained by setting $p=\frac{1}{2}$ in \eqref{eq: thresholdsBEC}. For the BSC($\epsilon$), using Theorem \ref{thm: tighterBSC}, one can tighten \eqref{eq: region1} by replacing $R_\bh$ with $1-H(\epsilon)$ and $\beta_{\bh,R}$ with $KL\bra{H^{-1}(1-R)\|\min\{\epsilon,1-\epsilon\}}$. For $m=2$, the stabilizable region for the BEC and BSC is shown in Fig \ref{fig: BECvBSC} where $|\lambda_{max}|^{\frac{1}{n}}$ is plotted against the channel parameter.
\begin{figure}
 \centering
\includegraphics[scale=0.4]{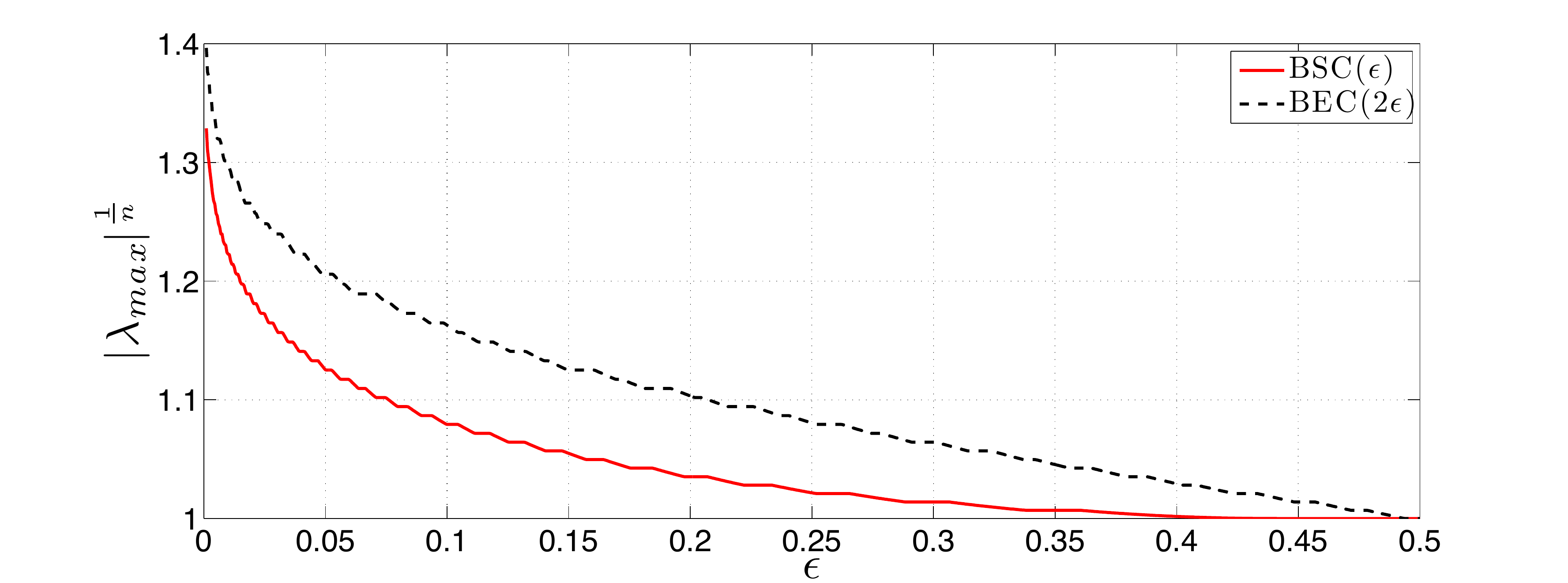}
\caption{Comparing the stabilizable regions of BSC and BEC using linear codes}
\label{fig: BECvBSC}
\end{figure}

\section{Decoding over the BEC}
\label{sec: DecodingBEC}

Owing to the simplicity of the erasure channel, it is possible to come up with an efficient way to perform maximum likelihood decoding at each time step. We will show that the average complexity of the decoding operation at any time $t$ is constant and that it being larger than $KC^3$ decays exponentially in $C$. Consider an arbitrary decoding instant $t$, let $c=[c_1^T,\ldots,c_t^T]^T$ be the transmitted codeword and let $z=[z_1^T,\ldots,z_t^T]^T$ denote the corresponding channel outputs. Recall that $\mathbb{H}_{n,R}^t$ denotes the $\overline{n}t\times nt$ leading principal minor of $\mathbb{H}_{n,R}$. Let $z_e$ denote the erasures in $z$ and let $H_e$ denote the columns of $\mathbb{H}_{n,R}^t$  that correspond to the positions of the erasures. Also, let $\tilde{z}_e$ denote the unerased entries of $z$ and let $\tilde{H}_e$ denote the columns of $\mathbb{H}_{n,R}^t$ excluding $H_e$. So, we have the following parity check condition on $z_e$, $H_ez_e = \tilde{H}_e\tilde{z}_e$. Since $\tilde{z}_e$ is known at the decoder, $s\triangleq \tilde{H}_e\tilde{z}_e$ is known. Maximum likelihood decoding boils down to solving the linear equation $H_ez_e = s$.
Due to the lower triangular nature of $H_e$, unlike in the case of traditional block coding, this equation will typically not have a unique solution, since $H_e$ will typically not be full rank. This is alright as we are not interested in decoding the entire $z_e$ correctly, we only care about decoding the earlier entries accurately. If $z_e = [z_{e,1}^T,\,\, z_{e,2}^T]^T$, then $z_{e,1}$ corresponds to the earlier time instants while $z_{e,2}$ corresponds to the latter time instants. The desired reliability requires one to recover $z_{e,1}$ with an exponentially smaller error probability than $z_{e,2}$. Since $H_e$ is lower triangular, we can write $H_ez_e = s$ as
\begin{align}
\label{eq: bec1}
 \left[\begin{array}{cc}
  H_{e,11} & 0\\
  H_{e,21} & H_{e,22}
 \end{array}\right]\left[\begin{array}{c}z_{e,1}\\z_{e,2}\end{array}\right] = \left[\begin{array}{c}s_1\\s_2\end{array}\right]
\end{align}
Let $H_{e,22}^\bot$ denote the orthogonal complement of $H_{e,22}$, ie., $H_{e,22}^\bot H_{e,22} = 0$. Then multiplying both sides of \eqref{eq: bec1} with diag$(I,H_{e,22})$, we get
\begin{align}
 \label{eq: bec2}
 \left[\begin{array}{c}
  H_{e,11}\\
  H_{e,22}^\bot H_{e,21}
 \end{array}\right]z_{e,1} = \left[\begin{array}{c}s_1\\H_{e,22}^\bot s_2\end{array}\right]
\end{align}
If $[H_{e,11}^T\,\,\, (H_{e,22}^\bot H_{e,21})^T]^T$ has full column rank, then $z_{e,1}$ can be recovered exactly. The decoding algorithm now suggests itself, i.e., find the smallest possible $H_{e,22}$ such that $[H_{e,11}^T\,\,\, (H_{e,22}^\bot H_{e,21})^T]^T$ has full rank and it is outlined in Algorithm \ref{alg: algorithm}.
\begin{algorithm}
\caption{Decoder for the BEC}
\label{alg: algorithm}
\begin{enumerate}
\item Suppose, at time $t$, the earliest uncorrected error is at a delay $d$. Identify $z_e$ and $H_e$ as defined above.
\item Starting with $d'=1,2,\ldots,d$, partition
\begin{align*}
 z_e = [z_{e,1}^T\,\,z_{e,2}^T]^T\,\,\text{and}\,\,H_e = \left[\begin{array}{cc}H_{e,11}&0\\H_{e,21}&H_{e,22}\end{array}\right]
\end{align*}
where $z_{e,2}$ correspond to the erased positions up to delay $d'$.
\item Check whether the matrix $\left[\begin{array}{c}
  H_{e,11}\\
  H_{e,22}^\bot H_{e,21}
 \end{array}\right]$
has full column rank.
\item If so, solve for $z_{e,1}$ in the system of equations
\begin{align*}
  \left[\begin{array}{c}
  H_{e,11}\\
  H_{e,22}^\bot H_{e,21}
 \end{array}\right]z_{e,1} = \left[\begin{array}{c}s_1\\H_{e,22}^\bot s_2\end{array}\right]
\end{align*}
\item Increment $t=t+1$ and continue.
\end{enumerate}
\end{algorithm} 

\subsection{Complexity}
Suppose the earliest uncorrected error is at time $t-d+1$, then steps 2), 3) and 4) in Algorithm \ref{alg: algorithm} can be accomplished by just reducing $H_e$ into the appropriate row echelon form, which has complexity $\OO{d^3}$. The earliest entry in $z_e$ is at time $t-d+1$ implies that it was not corrected at time $t-1$, the probability of which is $P_{d-1,t-1}^e \leq \eta 2^{-n\beta (d-1)}$. Hence, the average decoding complexity is at most $K\sum_{d>0}d^3 2^{-n\beta d}$ which is bounded and is independent of $t$. In particular, the probability of the decoding complexity being $Kd^3$ is at most $\eta 2^{-n\beta d}$. The decoder is easy to implement and its performance is simulated in Section \ref{sec: Simulations}. Note that the encoding complexity per time iteration increases linearly with time. This can also be made constant on average if the decoder can send periodic acks back to the encoder with the time index of the last correctly decoded source bit. 
\section{Simulations}
\label{sec: Simulations}

Consider stabilizing the scalar unstable process of \eqref{eq: system_model} with $\lambda=2$, and $w_t$ and $v_t$ being uniform on $[-30,30]$ and $[-1,1]$ respectively,  over a binary erasure channel with erasure probability $\epsilon = 0.3$. Also, let $n=15$, i.e., we get $15$ channel uses per time instant. To quantize the measurements, we shall use a $L$-regular lattice quantizer with bin width $\delta > V$ which is depicted in Fig \ref{fig: lattice_quant} (see for e.g., \cite{Sahai})
\begin{figure}
\centering
\includegraphics[scale=0.75]{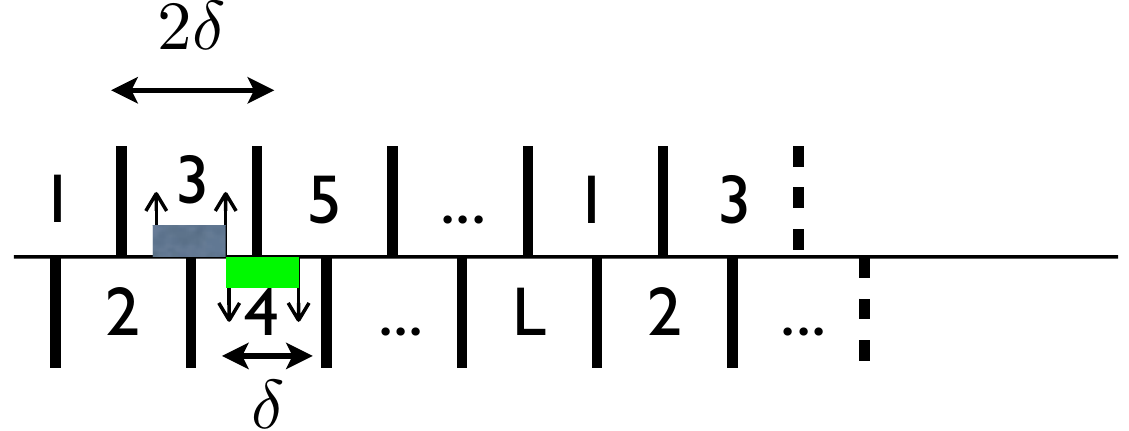}
\caption{$L$-regular lattice quantizer with bin width $\delta$}
\label{fig: lattice_quant}
\end{figure}
If at some time $t$, $x_t\in(-\delta,\delta)$ and the controller is aware of it, then, $y_{t+1}\in \bigl[-\frac{V+W}{2}-|\lambda|\delta, |\lambda|\delta + \frac{W+V}{2}\bigr]$ and hence can lie in atmost one of $\frac{V+W+2|\lambda|\delta}{\delta}$ possible bins. So, if $L > \frac{V+W+2|\lambda|\delta}{\delta}$, then the observer can encode the bin label of $y_{t+1}$ as $b_t\in\{0,1\}^k$, while the channel coder produces the channel inputs $c_t\in\{0,1\}^n$. Clearly we need $2^k \geq L$. So, for $\delta=V$, $L = 35$ and hence $k=6$, the associated code rate $R = 6/15 < 1 - \log_2(1+0.3) = 0.6215$. Using Lemma \ref{lem: Sahai}, inorder to stabilize $|x_t|$, one needs a code with exponent $\beta \geq \frac{1}{n} = 0.0667$. Using Corollary \ref{cor: thresholdsBEC}, causal linear codes exist for $\beta <\beta^*= H^{-1}(1-R)\bra{\log_2\bra{\frac{1}{\bh}} + \log_2\bra{2^{1-R}-1}}$. A quick calculation shows that for $k=6,n=15$, $\beta^* = \frac{1.1413}{n} = 0.0761 > 0.0667$. A time invariant code $\mathbb{H}_{15,0.4}\in\mathbb{TZ}_{\frac{1}{2}}$ was randomly generated and used with the decoding algorithm in Section \ref{sec: DecodingBEC}. Fig \ref{fig: samplePath} shows the plot of a sample path of the above process before and after closing the loop, the fact that the plant has been stabilized is clear.
\begin{figure}[ht]
\centering
\subfigure[Open loop trajectory]{\includegraphics[scale=0.3]{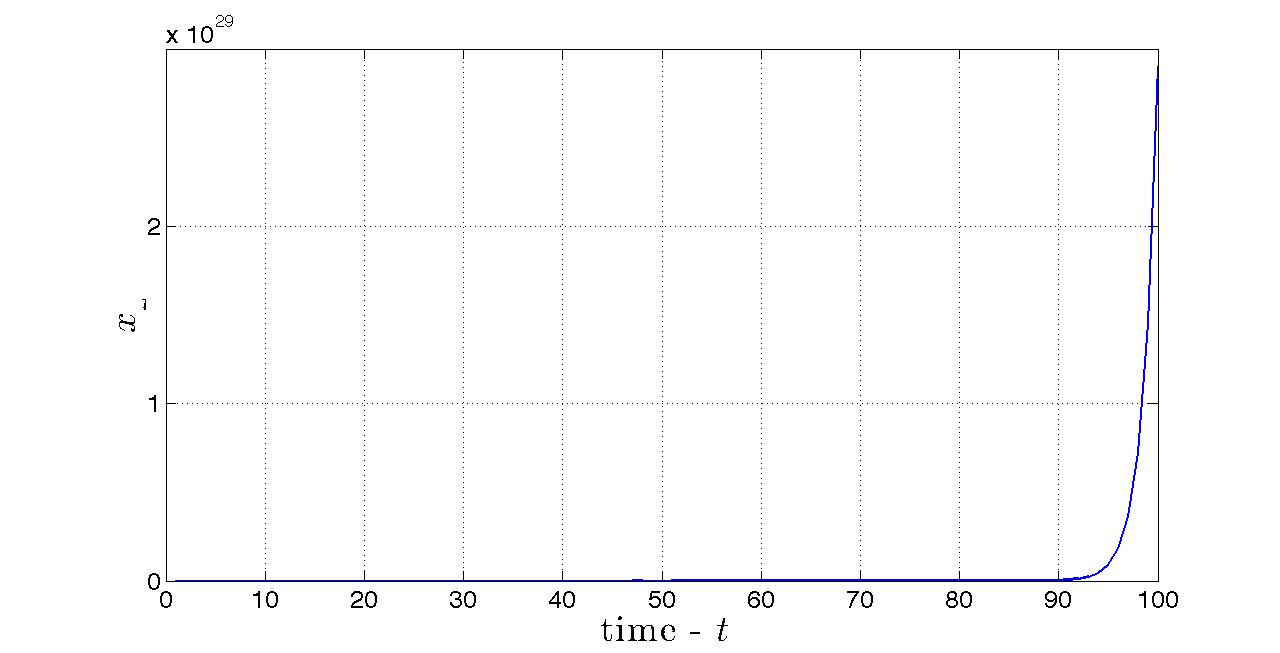}}\,\,
 \subfigure[Trajectory after closing the loop]{\includegraphics[scale=0.3]{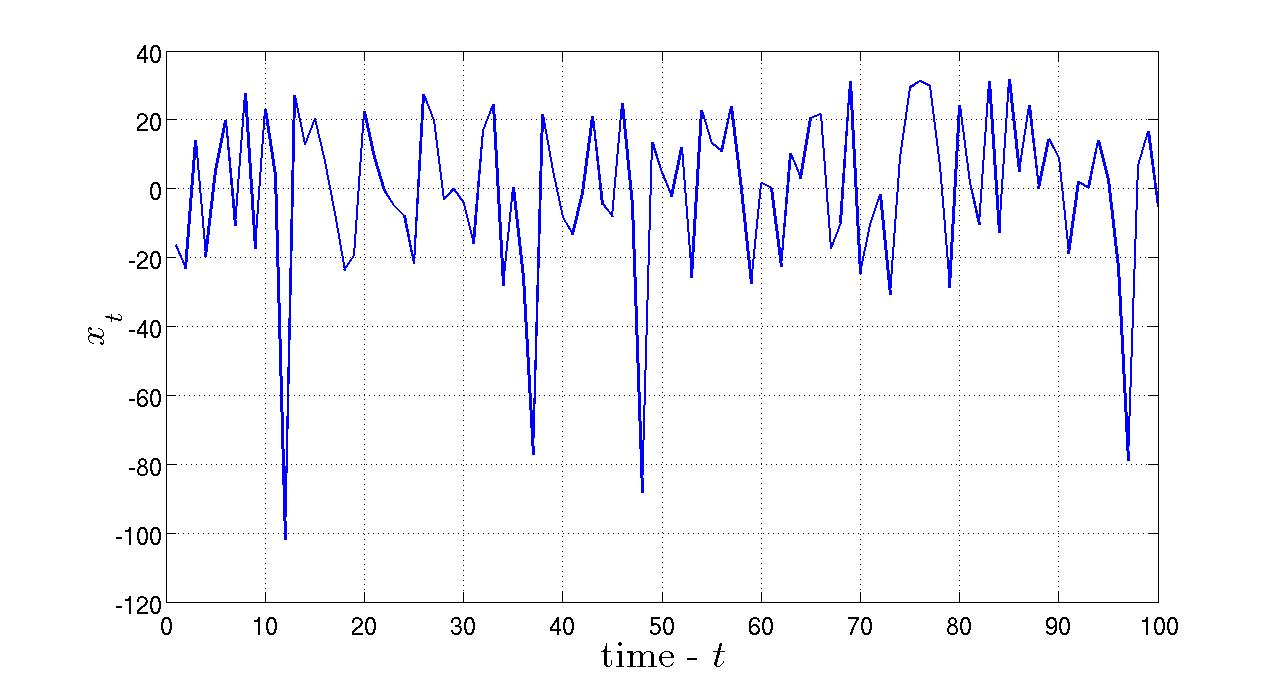}}
\caption{}
\label{fig: samplePath}
\end{figure}

\begin{figure}
 \centering
\includegraphics[scale=0.35]{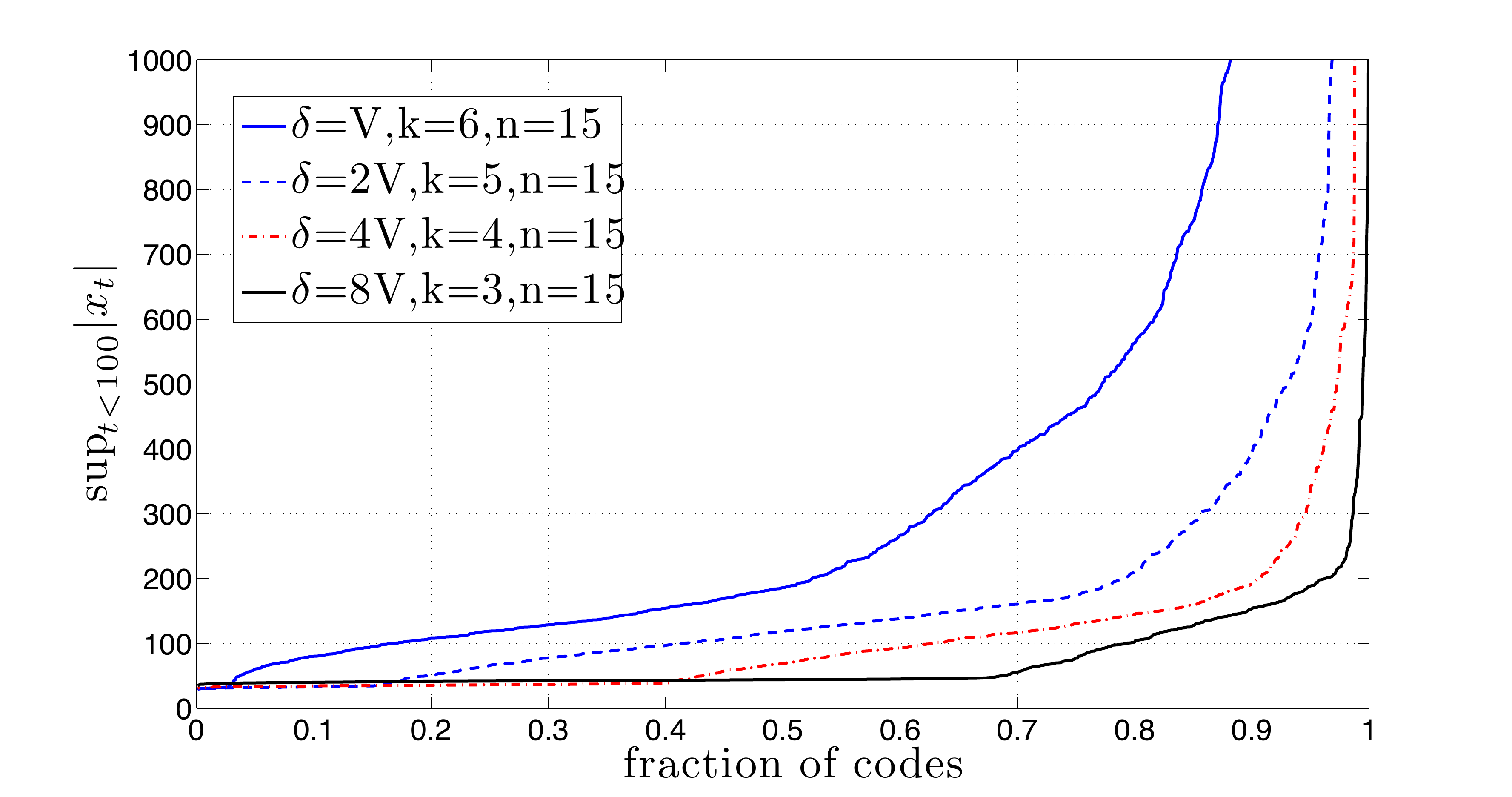}
\caption{The control performance of the code ensemble improves as the rate decreases}
\label{fig: comparison}
\end{figure}
By easing up on the rate $R$, the control performance of the code ensemble improves. This is demonstrated in Fig \ref{fig: comparison}. In the example considered above, one can decrease the rate by increasing $\delta$, e.g., increasing $\delta$ from $V$ to $4V$, $k$ drops from 6 to 4 bits and the resulting rate drops from $0.4$ to $0.267$. For each value of $k$ from 3 to 6, 1000 time invariant codes were generated at random from $\mathbb{TZ}_{\frac{1}{2}}$. Each such code was used to control the process above over a time horizon of $T=100$. The $y-$axis denotes $\sup_{t<100}|x_t|$ while the $x-$axis denotes the proportion of codes for which $\sup_{t<100}|x_t|$ is below a prescribed value. For example, with $k=6,n=15$, $\sup_{t<100}|x_t|$ was less than 200 for $50\%$ of the codes while with $k=3,n=15$, this fraction increases to more than $95\%$. The $y-$axis has been capped at 1000 for clarity. This shows that one can trade-off utilization of communication resources and control performance. A rate $R=0.267$ is much smaller than the capacity $1-\log_2(1+0.3)=0.6215$ but allows codes with much better control performance. 

\section{Conclusion}
\label{sec: Conclusion}

We prove the existence of linear anytime codes with high probability. Our analysis considered binary alphabet, but is easily extendable. This is a significant step, since prior work only demonstrated the existence of such codes. For the BEC, we also propose an efficient decoding algorithm with constant average complexity per iteration, and for which the probability of having a complexity of $KC^3$ at some given iteration decays exponentially in $C$. Simulations validate the efficacy of the method. Constructing code families with efficient decoding for other channels, such as the BSC or AWGN remains an interesting open problem. 
\bibliographystyle{IEEEbib}
\bibliography{ISIT2011arXiv}
\appendix

\begin{lem}
 \label{lem: subspace}
Let $V = \{0,1\}^m$ and define a probability function over $V$ such that, for each $v\in V$, $P(v) = p^{\|v\|}(1-p)^{m-\|v\|}$.
If $U$ is a $\ell-$dimensional subspace of $V$, then
\begin{align*}
 P(U) \leq \begin{cases}
				(1-p)^{m-\ell} & \text{if }\,\,p\leq 1/2 \\
				p^{m-\ell}       & \text{if }\,\,p\geq 1/2
		\end{cases}
\end{align*}
\end{lem}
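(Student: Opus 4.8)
The plan is to reduce the general statement to the extremal case where the subspace $U$ is a coordinate subspace, i.e.\ spanned by $\ell$ standard basis vectors, and then compute $P(U)$ directly in that case. Assume throughout that $p \leq 1/2$; the case $p \geq 1/2$ is symmetric under interchanging the roles of $0$ and $1$ in each coordinate, which leaves the subspace structure intact and swaps $p \leftrightarrow 1-p$, so it suffices to treat $p \leq 1/2$. The key structural observation is that for a coordinate subspace $U_0 = \{v : v_i = 0 \text{ for } i \notin S\}$ with $|S| = \ell$, the product form of $P$ factorizes and gives exactly
\begin{align*}
 P(U_0) = \prod_{i \notin S}(1-p) = (1-p)^{m-\ell},
\end{align*}
since the coordinates outside $S$ are forced to $0$ and each contributes a factor $(1-p)$, while the coordinates in $S$ range freely and their probabilities sum to $1$. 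So the target bound is attained with equality on coordinate subspaces, and the claim is that these are the worst case.

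The main step is therefore to show that among all $\ell$-dimensional subspaces $U$, the probability $P(U) = \sum_{v\in U} p^{\|v\|}(1-p)^{m-\|v\|}$ is maximized by a coordinate subspace. I would prove this by an exchange/compression argument on a basis of $U$ in reduced row echelon form: write a basis as rows of an $\ell \times m$ matrix $M$ in RREF, with pivot columns $j_1 < \cdots < j_\ell$. If the pivot columns are not $\{1,\dots,\ell\}$, or if $M$ has nonzero off-pivot entries, one shows that ``moving mass toward low weight'' — replacing $M$ by a matrix whose row space is the coordinate subspace on the pivot columns (or, more carefully, performing one elementary step that does not decrease $P$) — cannot decrease $P(U)$. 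Concretely, one can argue coordinate-by-coordinate: for a non-pivot column $c$, the vectors of $U$ either all vanish in column $c$ (in which case $c$ behaves like an outside coordinate and contributes $(1-p)$) or they split evenly $0/1$ across that column within each coset, and since $p \leq 1/2$ the all-zero option dominates; iterating this intuition, the cleanest rigorous route is to induct on $m-\ell$ via a single column projection. Alternatively — and this is probably the slickest argument — note that $U$ is the kernel of some $(m-\ell)\times m$ parity-check matrix $H$ of full row rank; write $P(U) = \E[\mathbf{1}(Hv = 0)]$ under the product Bernoulli$(p)$ law; the nonzero rows of $H$ (in RREF) each impose a parity constraint on a set of coordinates, and an elementary computation shows $\Pr(Hv=0) = 2^{-(m-\ell)}\E\bigl[\prod_{\text{rows }r}(1 + (1-2p)^{|\mathrm{supp}(r)|} \cdot(\pm))\bigr]$-type expansion via characters; bounding each Fourier term by $(1-2p)^{\,\cdot\,} \le $ its value at support size $1$ when $p \le 1/2$ recovers $(1-p)^{m-\ell}$.

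The hard part will be making the ``coordinate subspace is extremal'' step fully rigorous without an unwieldy case analysis: the exchange argument has to handle the interaction between pivot and non-pivot columns cleanly, and the naive claim ``reducing each column to all-zero only increases $P$'' is false in general (you cannot zero out a pivot column), so the induction must be set up to peel off one \emph{non-pivot} coordinate at a time, contracting $U$ to a subspace of $\{0,1\}^{m-1}$ of the same dimension $\ell$ and verifying $P$ does not decrease. I expect the character/Fourier-analytic formulation to be the most robust: expand $\mathbf{1}(Hv=0) = 2^{-(m-\ell)}\sum_{u \in \mathrm{rowspace}(H)} (-1)^{\langle u, v\rangle}$, take expectations to get $P(U) = 2^{-(m-\ell)} \sum_{u \in U^\perp}(1-2p)^{\|u\|}$, and then bound this sum: the all-zero $u$ gives $1$, and since $U^\perp$ has dimension $m-\ell$ with $2^{m-\ell}$ elements each contributing at most $1$ in absolute value but we need a lower-order-friendly bound — here one uses that for $p\le 1/2$ all terms $(1-2p)^{\|u\|}$ are nonnegative and $\sum_{u\in U^\perp}(1-2p)^{\|u\|} \le \sum_{u \in \{0,1\}^{m-\ell}\text{-coordinate dual}}(1-2p)^{\|u\|} = (2-2p)^{m-\ell} = 2^{m-\ell}(1-p)^{m-\ell}$, which is exactly the inequality ``coordinate dual maximizes the exponential sum,'' provable by the same compression argument but now on $U^\perp$ where zeroing coordinates \emph{is} legitimate. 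This reduces everything to the single clean claim that $\sum_{u\in W}(1-2p)^{\|u\|}$ over an $r$-dimensional $W \subseteq \{0,1\}^m$ is maximized when $W$ is a coordinate subspace, which follows by the standard down-compression on each non-support coordinate.
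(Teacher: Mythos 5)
Your proposal circles the right idea but never lands the short argument that makes this lemma ``easy'' (the paper gives no proof, just that remark). Here is the direct route: put a basis of $U$ in reduced row echelon form and let $S$ be the set of $\ell$ pivot coordinates. The projection $v \mapsto v|_S$ is a linear bijection from $U$ onto $\{0,1\}^\ell$ (injective because a vector in $U$ is determined by its pivot entries, surjective by dimension count), and clearly $\|v\| \geq \|v|_S\|$. When $p \leq 1/2$ we have $p/(1-p)\leq 1$, so
\begin{align*}
P(v) \;=\; (1-p)^m\Bigl(\tfrac{p}{1-p}\Bigr)^{\|v\|} \;\leq\; (1-p)^m\Bigl(\tfrac{p}{1-p}\Bigr)^{\|v|_S\|},
\end{align*}
and summing over $v\in U$ (equivalently over $\{0,1\}^\ell$) gives $P(U)\leq (1-p)^m\bigl(1+\tfrac{p}{1-p}\bigr)^\ell = (1-p)^{m-\ell}$; the case $p\geq 1/2$ is symmetric. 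No extremality lemma, no compression, no Fourier.

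Two specific criticisms of what you wrote. First, the Fourier detour is correct as algebra ($P(U)=2^{-(m-\ell)}\sum_{u\in U^\perp}(1-2p)^{\|u\|}$ is a valid identity), but it is circular: the dual claim ``$\sum_{u\in W}q^{\|u\|}\leq(1+q)^{\dim W}$ for $q\in[0,1]$'' is \emph{exactly} the same shape as the primal claim (with $q=p/(1-p)$ replaced by $q=1-2p$), so nothing has been simplified. Second, your concluding appeal to ``standard down-compression on each non-support coordinate'' is both vague and, as stated, incorrect: a coordinate on which every vector of $W$ already vanishes contributes nothing, and compressing it is a no-op, so that operation cannot reduce a general $W$ to a coordinate subspace. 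What you actually want is your earlier, correct idea of peeling off one \emph{non-pivot} coordinate at a time: for non-pivot $i$ one has $e_i\notin U$, so zeroing out coordinate $i$ is injective on $U$ (hence dimension-preserving), and a one-line computation shows $P_m(U)\leq(1-p)P_{m-1}(U')$ for $p\le 1/2$; iterate to the base case $m=\ell$. Had you followed that thread to completion instead of pivoting to compression-on-non-support and then to characters, you would have a complete proof — it is just the RREF projection argument above executed one coordinate at a time.
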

The proof is easy.
\begin{rem}[Rank of a block lower traingular matrix]
\label{rem: rank}
 Let $A$ be the following block lower triangular matrix
\begin{align*}
   A = \left[\begin{array}{cccc}
				A_{11} & 0 & \ldots & \ldots\\
				A_{21} & A_{22} & 0 & \ldots\\
				\vdots & \vdots & \ddots & \vdots\\
				A_{t1} & A_{t2} & \ldots & A_{tt}\\
				\end{array}\right]
\end{align*}
 If each of $\{A_{jj}\}_{j=1}^t$ has full rank, then $A$ also has full rank.
\end{rem}

\begin{proof}[Proof of Theorem \ref{thm: Toeplitz}]
 Consider an arbitrary decoding instant, $t$. Since $\minW{d}{t} = \minW{d}{t'}$ and $\numW{w}{d}{t} = \numW{w}{d}{t'}$ for all $t,t'$, we will drop these superscripts and write $\minW{d}{t} = \minWi{d}$ and $\numW{w}{d}{t} = \numWi{w}{d}$. Let $c = [c_1^T,\ldots,c_t^T]^T$, where $c_i\in\{0,1\}^n$, be a fixed binary word such that $c_{\tau < t-d+1}=0$ and $c_{t-d+1}\neq 0$. Also, let $\mathbb{H}_{n,R}$ be drawn from the ensemble $\mathbb{TZ}_p$ and let $\mathbb{H}_{n,R}^t$ denote the $\overline{n}t\times nt$ principal minor of $\mathbb{H}_{n,R}$. We examine the probability that $c$ is a codeword of $\mathbb{H}_{n,R}^t$, i.e., $P\bra{\mathbb{H}_{n,R}^tc=0}$. Now, since $c_{\tau<t-d+1}=0$, $\mathbb{H}_{n,R}^tc=0$ is equivalent to
\begin{align}
\label{eq: toeplitzproof1}
\left[\begin{array}{cccc}
	\pc_1    & 0 	    & \ldots      &  \ldots\\
	\pc_2    & \pc_1         & 0 	     & \ldots \\
	\vdots & \vdots      & \ddots     & \vdots\\
	\pc_d     & \pc_{d-1} & \ldots       & \pc_1
\end{array}\right]\left[\begin{array}{c}c_{t-d+1}\\c_{t-d+2}\\\vdots\\c_t\end{array}\right]=\left[\begin{array}{c}0\\0\\\vdots\\0\end{array}\right]
\end{align}
Note that \eqref{eq: toeplitzproof1} can be equivalently written as follows
\begin{align}
\underbrace{ \left[\begin{array}{cccc}
	C_{t-d+1}    & 0 	    & \ldots      &  \ldots\\
	C_{t-d+2}   & C_{t-d+1}       & 0 	     & \ldots \\
	\vdots & \vdots      & \ddots     & \vdots\\
	C_T   & C_{t-1} & \ldots       & C_{t-d+1}
\end{array}\right]}_{\triangleq C}\underbrace{\left[\begin{array}{c}h_1\\h_2\\\vdots\\h_d\end{array}\right]}_{\triangleq h}=\left[\begin{array}{c}0\\0\\\vdots\\0\end{array}\right]
\end{align}
where $h_i = \text{vec}(\pc_i^T)$, i.e., is a $n\overline{n}\times 1$ column obtained by stacking the columns of $\pc_i^T$ one below the other, and $C_i \in\{0,1\}^{\overline{n}\times n\overline{n}}$ is obtained from $c_i$ as follows. 
\begin{align*}
 C_i =  \left[\begin{array}{cccc}
	c_i^T    & 0 	    & \ldots      &  \ldots\\
	0   & c_i^T       & 0 	     & \ldots \\
	\vdots & \vdots      & \ddots     & \vdots\\
	0  & 0 & \ldots       & c_i^T
\end{array}\right]
\end{align*}
Since $c_{t-d+1} \neq 0$, $C_{t-d+1}$ has full rank $\overline{n}$ and consequently $C$ has full rank $d\overline{n}$. So, $P(\mathbb{H}_{n,R}^tc = 0) = P(Ch = 0)$. This is the probability that $h$ lies in a $dn\overline{n} - d\overline{n}$ dimensional null space of $C$. This further implies that $h' = [h_2^T,\ldots,h_d^T]^T$ lies in a $(d-1)n\overline{n} - (d-1)\overline{n}$ dimensional subspace. Note that $h'$ is an $(d-1)n\overline{n}$ dimensional Bernoulli($p$) vector and using Lemma \ref{lem: subspace}, we have
\begin{align}
\label{eq: toeplitzproof2_5}P(\mathbb{H}_{n,R}^tc = 0) &\leq (1-p)^{\overline{n}(d-1)}\\
\implies P\bra{\minWi{d} < \alpha nd} &\leq (1-p)^{\overline{n}(d-1)}\sum_{w'\leq \alpha nd}\binom{nd}{w'}\nonumber\\
 &\leq (1-p)^{\overline{n}(d-1)}2^{ndH(\alpha)}\nonumber\\
\label{eq: toeplitzproof2}& = \eta2^{-nd\bra{(1-R)\log_2(1/(1-p)) - H(\alpha)}}
\end{align}
where $\eta = (1-p)^{-\overline{n}}$. Similarly,
\begin{align}
 \label{eq: teoplitzproof3}
P\bra{\numWi{w}{d} > 2^{\theta w}} &\leq 2^{-\theta w}\E\numWi{w}{d}\nonumber\\
							      &\leq  \eta2^{-\theta w}\binom{nd}{w}(1-p)^{\overline{n}d}\nonumber\\
							      &\leq \eta2^{-nd\bra{\theta w/nd - H(w/nd) + (1-R)\log_2(1/(1-p))}}
\end{align}
For convenience, define
\begin{align*}
 \delta_1 &= (1-R)\log_2(1/(1-p)) - H(\alpha)\\
 \delta_{2,w} &= \theta \frac{w}{nd} - H\bra{\frac{w}{nd}} + (1-R)\log_2(1/(1-p))
\end{align*}
We need to choose $\theta$ such that $\delta_{2,w} > \delta > 0$ for all $\alpha \leq \frac{w}{nd} \leq 1$. Now, define
\begin{align}
 \theta^* = \max_{x\geq \alpha}\frac{H(x) - (1-R)}{x}
\end{align}
Then for each $\theta > \theta^*$, there is a $\delta > 0$ such that $\delta_{2,w} > \delta$ for all $\alpha nd \leq w \leq nd$. A simple calculation gives $\theta^* = \log_2\bra{\frac{1}{2^{1-R}-1}}$. For such a choice of $\theta > \theta^*$, continuing from \eqref{eq: teoplitzproof3}, we have
\begin{align}
\label{eq: toeplitzproof4}
 P\bra{\exists\,\alpha nd\leq w\leq nd\,\,\ni\,\,\numWi{w}{d} > 2^{\theta w}} \leq nd2^{-nd\delta}
\end{align}
for some $\delta' > 0$. For some fixed $d_o$ large enough, applying a union bound over $d\geq d_o$ to \eqref{eq: toeplitzproof2} and \eqref{eq: toeplitzproof4}, we get
\begin{align}
 \label{eq: toeplitzproof5}
P\bra{\exists\,\,d\geq d_o\,\,\ni\,\,\minWi{d}<\alpha nd\text{ or }\numWi{w}{d}>2^{\theta w}} \leq 2^{-\Omega(nd_o)}
\end{align}
Recall that for the above argument to be complete, we also require $\mathbb{H}_{n,R}^t$ to have full rank for each $t$. By Remark \ref{rem: rank}, this is guaranteed if $H_1$ has full rank. So, 
\end{proof} 

\begin{proof}[Proof of Theorem \ref{thm: finiteTimeHorizon}]
 The proof is by induction. Suppose $\mathbb{H}_{n,R,T-1}$ has $(\alpha,\theta,d_o)-$anytime distance. Construct $\mathbb{H}_{n,R,T}$ as follows.
\begin{align*}
 \mathbb{H}_{n,R,T} = \left[\begin{array}{c|ccc}
				\pc_{11} & 0 & \ldots & \ldots \\
				\cline{2-4}\\
				\pc_{21} & & & \\
				\vdots & &\mathbb{H}_{n,R,T-1} &\\
				\pc_{T1} & & &
				\end{array}\right]
\end{align*}
where $\pc_{11}$ is chosen to be a full rank matrix and the entries of $\pc_{j1}\in\{0,1\}^{\overline{n}\times n}$, $j\geq 2$, are drawn according to i.i.d Bernoulli($\frac{1}{2}$). We will show that if $\mathbb{H}_{n,R,T-1}$ has $(\alpha,\theta,d_o)-$anytime distance, then $\mathbb{H}_{n,R,T}$ will is also have $(\alpha,\theta,d_o)-$anytime distance with a probability $1-2^{-\Omega(nd_o)}$. Note that the probability is over the choice of $\{\pc_{j1}\}_{j=1}^T$. Let $\{\minW{d}{t},\numW{w}{d}{t}\}_{d\geq d_o,t\leq T}$ be the weight distribution parameters associated to $\mathbb{H}_{n,R,T}$. Since $\mathbb{H}_{n,R,T-1}$ has $(\alpha,\theta,d_o)-$anytime distance, we have the following
\begin{align*}
 \minW{d}{t} &\geq \alpha nd,\,\,\,\forall\,\,\,d_o\leq d\leq t-1,\,\,t\geq d_o+1\\
 \numW{w}{d}{t} &\leq 2^{\theta w},\,\,\,\forall\,\,\,w\geq\alpha nd,\,\,d_o\leq d\leq t-1,\,\,t\geq d_o+1
\end{align*}
Towards proving that $\mathbb{H}_{n,R,T}$ has $(\alpha,\theta,d_o)-$anytime distance, it remains to show the following holds with a positive probability.
\begin{align}
\label{eq: induction1}
 \text{For } t\geq d_o,\,\,\, \minW{t}{t} \geq \alpha n t,\,\,\,\numW{w}{t}{t} &\leq 2^{\theta w},\,\,\,\forall\,\,\,w\geq \alpha nt
\end{align}
Let $c\in\{0,1\}^{nt}$ such that $c_{\tau < t-d+1}=0$ and $c_{t-d+1}\neq 0$, then it is easy to see that $P\bra{\mathbb{H}_{n,R,T}^tc = 0} = 2^{-n(d-1)}$. The rest of the analysis follows exactly along the lines of the proof of Theorem \ref{thm: Toeplitz} starting from \eqref{eq: toeplitzproof2_5} with $p=\frac{1}{2}$. This gives the following result
\begin{align*}
 &P\bra{\mathbb{H}_{n,R,T}\text{ is bad}|\mathbb{H}_{n,R,T-1}\text{ is good}} = \\
&P\bra{\{\minW{d}{t},\numW{w}{d}{t}\}\text{ do not satisfy \eqref{eq: induction1}}} \leq 1-2^{-\Omega(nd_o)}
\end{align*}
In particular, there exists a choice of $\{\pc_{j1}\}_{j=1}^T$ such that $\mathbb{H}_{n,R,T}$ has $(\alpha,\theta,d_o)-$anytime distance, whenever $\mathbb{H}_{n,R,T-1}$ has $(\alpha,\theta,d_o)-$anytime distance. For the inductive argument to be complete, one needs to prove that there exists a $\mathbb{H}_{n,R,d_o}$ that has $(\alpha,\theta,d_o)-$anytime distance. This is already covered in the proof of the above inductive step.
\end{proof}

\begin{proof}[Proof of Theorem \ref{thm: tighterBSC}]
 The result can be obtained by a slight modification of the proof of Theorem \ref{thm: Toeplitz}. Consider an arbitrary decoding instant $\t$, if the all zero codeword is transmitted is the transmitted codeword, $z=[z_1^T,\ldots,z_t^T]^T$ is the channel output and $v=[v_1,\ldots,v_t^T]^T$ is the bit flip pattern induced by the channel BSC($\epsilon$), $\epsilon < 1/2$, then $z = v$ and for any $c\in\mathcal{C}_{t,d}$, we have
\begin{align}
 P_{t,d}^e &\leq P\bra{\bigcup_{c\in\mathcal{C}_{t,d}}0\text{ decoded as }c,\|v_{t-d+1:t}\|\leq nd(\epsilon+\delta)} \nonumber\\
&+ P\bigl(\|v_{t-d+1:t}\|> nd(\epsilon+\delta)\bigr)\nonumber\\
\label{eq: tighter1}&\leq \sum_{\minWi{d}\leq w\leq 2nd(\epsilon+\delta)}\numWi{w}{d}\bh^w + 2^{-ndKL(\epsilon+\delta\|\epsilon)}
\end{align}
Note that if $\minWi{d} > 2nd(\epsilon+\delta)$, then we get $P_{t,d}^e \leq 2^{-ndKL(\epsilon+\delta\|\epsilon)}$. 
Since we know from Theorem \ref{thm: Toeplitz} that $\minWi{d}$ can scale as $\alpha nd$ for $\alpha$ upto $H^{-1}(1-R)$, for any choice of $R$ such that $H^{-1}(1-R) > 2\epsilon$, there exists a $\delta > 0$ small enough such that $2\epsilon + 2\delta <\alpha < H^{-1}(1-R)$. Fix $\epsilon < 0.25$, then for each rate $R < 1-H(2\epsilon)$, one can achieve an exponent, $\beta$, upto $\beta < KL\bra{\frac{1}{2}H^{-1}(1-R)\| \epsilon}$. It is easy to see that this marks an improvement over Corollary \ref{cor: thresholdsBEC} only for $\epsilon$ such that $1-H(2\epsilon) > 1-2\log_2(\sqrt{\epsilon}+\sqrt{1-\epsilon})$. The result is now immediate. As an aside, a quick numerical calculation gives $\epsilon^* \approx 0.0753$. 
\end{proof}

\end{document}